\documentclass[pdftex,11pt,svgnames]{article}

\usepackage[utf8]{inputenc}            
\usepackage{amsmath}
\usepackage{amssymb}
\usepackage{bm}  
\usepackage{amsbsy} 
\usepackage{listings}
\usepackage{graphics}
\usepackage{float}

\usepackage[multiple]{footmisc}

\usepackage{enumerate}
\usepackage[shortlabels]{enumitem}

\IfFileExists{dsfont.sty}{
	\usepackage{dsfont}
	\usepackage[usenames,dvipsnames]{color} 
	
}{
	
}
\newcommand{\ip}[2]{\langle #1 , #2 \rangle}
\usepackage[numbers]{natbib}


\usepackage{pdfpages}
\usepackage{hyperref} 
\hypersetup{
    unicode=false,          
    pdftoolbar=true,        
    pdfmenubar=true,        
    pdffitwindow=false,     
    pdfstartview={FitH},    
    pdftitle={Multivariate higher order moments in multi-state life insurance},    
    pdfauthor={Jamaal Ahmad},     
    pdfsubject={Life Insurance Mathematics},   
    pdfcreator={ },   
    pdfproducer={ }, 
    pdfkeywords={ } { } { }, 
    pdfnewwindow=true,      
    colorlinks=true,       
    linkcolor=Black, 
    citecolor=Black,        
    filecolor=Black,      
    urlcolor=Black 
}


\usepackage[font=small,format=plain,labelfont=bf,up,textfont=it,up]{caption}

\usepackage[a4paper]{geometry}  
\usepackage[english]{babel}

\usepackage[autolanguage,np]{numprint}

\usepackage{doi}

\setlength{\parskip}{1ex plus 0.5ex minus 0.2ex}

\usepackage{fancyhdr}
\pagestyle{fancy}
\fancyhf{}
\fancyhead[L]{\leftmark}
\fancyfoot[C]{\thepage}

\footskip = 85 pt
\setcounter{secnumdepth}{4}
\setlength{\headheight}{19pt}

\usepackage{amsthm}
\theoremstyle{plain}
\newtheorem{Theorem}{Theorem}
\newtheorem{theorem}[Theorem]{Theorem}

\newtheorem{assumption}[Theorem]{Assumption}

\newtheorem{corollary}[Theorem]{Corollary}

\theoremstyle{definition}

\newtheorem{example}[Theorem]{Example}

\newenvironment{sproof}{%
  \proof}{\endproof}

\newcommand\xqed[1]{%
  \leavevmode\unskip\penalty9999 \hbox{}\nobreak\hfill
  \quad\hbox{#1}}
\newcommand\demo{\xqed{$\circ$}}
\newcommand\demoo{\xqed{$\triangle$}}

\theoremstyle{remark}

\newtheorem{remark}[Theorem]{Remark}

\numberwithin{Theorem}{section}

\numberwithin{equation}{section}

\hyphenation{in-homo-ge-ne-ous homo-ge-ne-ous time-homo-ge-ne-ous}

\parindent0pt
\linespread{1.1}

\newcommand{\dd}{{\, \mathrm{d} }}

\newcommand{\tuborg}[1]{\left\{ #1 \right\}}
\DeclareMathOperator{\Var}{Var}
\DeclareMathOperator{\E}{\mathbb{E}}

\DeclareMathOperator{\Cov}{Cov}

\newcommand{\1}[1]{\mathds{1}_{( #1)}}

\DeclareMathAlphabet{\mathpzc}{OMS}{pzc}{m}{it}
\newcommand{\J}{\mathpzc{J}}

\newcommand{\R}{\mathbb{R}}
\newcommand{\N}{\mathbb{N}}

\makeatletter
\newcommand*\expandableInput[1]{\@@input#1 }

\newcommand{\Rmnum}[1]{\expandafter\@slowromancap\romannumeral #1@}
\makeatother

\usepackage{wasysym} 

\usepackage{tikz}
\usetikzlibrary{arrows,positioning,calc} 
\tikzset{
    >=stealth',
    punkt/.style={
           rectangle,
           rounded corners,
           draw=black, thick,
           text width=5em,
           minimum height=2em,
           text centered},
    punktl/.style={
           rectangle,
           rounded corners,
           draw=black, thick,
           text width=7em,
           minimum height=2em,
           text centered},
    pil/.style={
           ->,
           shorten <=4pt,
           shorten >=4pt,},
    pildotted/.style={
           ->,
           shorten <=4pt,
           shorten >=4pt,
  dotted,
  }
}
\usepackage{subfig}
\usepackage{todonotes}

\usepackage{authblk}

\makeatletter
\let\@fnsymbol\@arabic
\makeatother

\title{Multivariate higher order moments in multi-state life insurance}
\author[1,$\star$]{Jamaal Ahmad}
\affil[1]{\footnotesize Department of Mathematical Sciences, University of Copenhagen, Universitetsparken 5, DK-2100 Copenhagen \O, Denmark.}
\affil[$\star$]{\footnotesize Corresponding author. E-mail: \href{mailto:jamaal@math.ku.dk}{jamaal@math.ku.dk}.}

\date{ }

\begin{document}
\maketitle 

\begin{center}
{\sc Abstract}
\end{center}
{\small
It is well-known that combining life annuities and death benefits introduce opposite effects in payments with respect to the mortality risk on the lifetime of the insured. In a general multi-state framework with multiple product types, such joint effects are less trivial. In this paper, we consider a multivariate payment process in multi-state life insurance, where the components are defined in terms of the same Markovian state process. The multivariate present value of future payments is introduced, and we derive differential equations and product integral representations of its conditional moments and moment generating function. Special attention is given to pair-wise covariances between two present values, where results closely connected to Hattendorff type of results for the variance are derived. The results are illustrated in a numerical example in a disability model.   

\vspace{5mm}

\textbf{Keywords:} Multi-state life insurance; Multivariate payment process; Dependent risk; Conditional moments; Product integral   

\vspace{5mm}

\textbf{2010 Mathematics Subject Classification:} 60J28; 	91B30; 91G99

\textbf{JEL Classification:}  
G22
}

\section{Introduction}
In this paper, we extend the results of \cite{Bladt2020} to multivariate payment processes with components defined in terms of the same multi-state Markov process. Our main contributions are differential equations and product integral representations of higher order moments of the multivariate present value, which are derived via its moment generating function. The main results appear as natural multivariate generalizations to those of \cite{Bladt2020, NorbergHigherOrder}, and to some extent also \cite{AdekambiChristiansen2017}, and this is pointed out in a series of remarks throughout the paper. We give special attention to pair-wise covariances between two present values and derive results that reveal close connections to Hattendorff type of results for the variance of a present value.   
 
The paper is motivated by the following. The use of multi-state time-inhomogeneous Markov models in life insurance, dating back to at least \cite{hoem69}, have provided a unified mathematical framework to model the random pattern of states of the insured with different kinds of life and health events; see e.g.\ an overview in \cite[Chapter V.2]{asmussensteffensen}.\ While these are models for primarily unsystematic biometric risk, one may also use multi-state Markov models to integrate systematic risk as in \cite{Norberg1995, NorbergBonus2}.\ For valuation and risk management, the present value of future payments is the main quantity of interest, and the conditional expected present value constitute the prospective reserve; this principle relies on diversification of unsystematic risk resulting from the law of large numbers on independent and identically distributed present values. The prospective reserve satisfies the celebrated Thiele's differential equations, cf.\ \cite{hoem69}.\ 

Since there may be a significant risk of future payments deviating from expected values, the insurer adds safety margins to premiums and reserves. While this is usually done implicitly via a first order basis consisting of prudent assumptions on interest and transition rates, an alternative approach is to compute safety margins explicitly via (properties of) the probability distribution of future payments. This is the focal point of \cite{christiansen2013}, where approximations based on the Central Limit Theorem (CLT) are established. Here, the variance of future payments is needed; it is obtained as an integral expression in \cite{hoem69}, see also \cite{hoemaalen, norberg1991}, and Hattendorff's theorem, which is formulated in the multi-state framework by \cite{Ramlau1988, norberg1992}, provides particularly simple formulae in terms of the associated multivariate counting process and so-called sum at risks. Turning to higher order moments beyond the variance increases the precision in these kinds of approximations; in \cite{NorbergHigherOrder}, we have differential equations for all moments of future payments, and in \cite{Bladt2020}, a general matrix-based framework using product integrals is developed to compute said moments, from which densities and distribution functions of future payments are approximated via polynomial expansions. These type of results are extended to the semi-Markovian framework in \cite{AdekambiChristiansen2017} to allow for duration dependency in payments and transition rates. Other ways of calculating distribution functions can be found in \cite{HesselagerNorberg, AdekambiChristiansen2020}, where integral and differential equations for these are derived.   

In practice, an insured typically holds a combination of various product types in order to be covered in case of different kinds of life and health events. In the multi-state framework, this is without further notice handled by considering the aggregated payment process, and reserves, higher order moments and probability distributions can be calculated using the methods outlined above. However, since these product types are contingent on events that work in different directions w.r.t.\ their underlying risks, a decomposition of the total risk into the different risk types is important in pricing and reserving as well as in risk management. In \cite{christiansen2013}, asymptotic safety margins are obtained for premiums and reserves, from which a decomposition of risk types corresponding to the different transitions in the Markov chain is established. In this paper, we take a different point of view; we decompose the payment process into different product types and examine the joint distribution of their future payments by calculating its moments; in particular, we calculate covariance matrices and illustrate how they may be used to approximate joint safety margins via multivariate CLT approximations. The simplest example is the product combination of life annuities and death benefits, which have opposite effects with respect to the mortality risk on the lifetime of the insured. In the general multi-state framework of this paper, more complex and non-trivial interactions can be examined; we give a motivating example in a disability model in the next section.    

For systematic risk, the idea of mixing various product types and analysing their interacting effect has already been discussed extensively in the concept of so-called natural hedging in life insurance, where one considers the effects on the reserves with respect to future changes in mortality rates in a portfolio consisting exactly of life annuities and death benefits; see e.g.\ \cite{coxlin}.\ While this concept arose in the survival model at first, an example in the multi-state model have appeared recently in \cite{italian} where the authors analyse interactions with life products and long-term care insurance using a disability model. Results from this paper allows us to carry out these kinds of analyses in a general multi-state framework, however with models for systematic risk restricted to time-inhomogeneous Markov chains on finite state spaces.         
    
The paper is structured as follows. We start out in Section \ref{sec:motivation} with presenting a motivating example of decomposition of payment processes into different types. Section \ref{sec:setup} then introduces the setup of the paper:\ the multi-state Markov process governing the state of the insured, the multivariate payment process describing the collection of life insurance contracts held by the insured, and the multivariate present value of their payments. In Section \ref{sec:mgf}, we derive product integral representations of the conditional moment generating function of the present value, which is shown to satisfy a system of partial differential equations. In Section \ref{sec:moments} we use this to derive ordinary differential equations and product integral representations for the conditional moments. Finally, in Section \ref{sec:numerical} we illustrate results of this paper in a numerical example of the motivating example from Section \ref{sec:motivation}.  
\newpage
 \section{Motivating example}\label{sec:motivation}
Before presenting the general setup and main results of the paper, we start out with giving a motivating example of the  problems we wish to solve. The setup of the example is a disability model, which serves to illustrate the motivations in the simplest non-trivial multi-state life insurance setting. 

Consider a single insured having a deterministic retirement time $T$ with the following product combination:
\begin{enumerate}[(1), leftmargin = 1cm]
\item Death benefit paying $S$ upon death before time $T$.    
\item Deferred life annuity paying a benefit rate of $b$ while alive, starting from  time $T$. 
\item Disability annuity paying a benefit rate of $d$ while disabled until time $T$.
\end{enumerate}  
The state of the insured is modelled in the disability model with recoveries, as depicted in Figure \ref{fig:invalidemodel}.\  
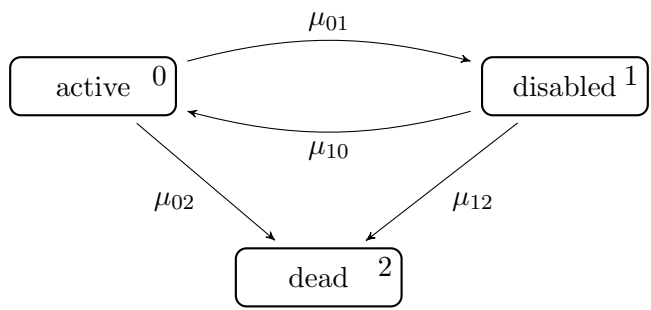
\begin{figure}[h]
	\centering
	\begin{tikzpicture}[node distance=2em and 0em]
		\node[punkt] (1) {disabled};
		\node[anchor=north east, at=(1.north east)]{$1$};
		\node[punkt, left = 40mm of 1] (0) {active};
		\node[anchor=north east, at=(0.north east)]{$0$};
		\node[draw = none, fill = none, left = 20 mm of 1] (test) {};
		\node[punkt, below = 20mm of test] (2) {dead};
		\node[anchor=north east, at=(2.north east)]{$2$};
	\path
		(0)	edge [pil, bend left = 15]	node [above]			{$\mu_{01}$}		(1)
		(1)	edge [pil]				node [below right]	{$\mu_{12}$}		(2)
		(0)	edge [pil]				node [below left]		{$\mu_{02}$}		(2)
		(1)	edge [pil, bend left = 15]	node [below]			{$\mu_{10}$}		(0)
	;
	\end{tikzpicture}
	\caption{Disability model with recoveries.}
	\label{fig:invalidemodel}
\end{figure} 

The payment process $B$ giving accumulated benefits from this product design then takes the form 
\begin{align*}
\dd B(t) &= \mathds{1}_{(Z(t-)=0)}\!\left(b\cdot \mathds{1}_{(t\geq T)}\dd t + S\cdot \mathds{1}_{(t<T)}\dd N_{02}(t)\right)\\[0.2 cm]
&\quad + \mathds{1}_{(Z(t-)=1)}\!\left(d\cdot \mathds{1}_{(t<T)}\dd t + b\cdot \mathds{1}_{(t\geq T)}\dd t + S\cdot \mathds{1}_{(t<T)}\dd N_{12}(t)\right), \quad B(0)=0,  
\end{align*} 
where $Z = \tuborg{Z(t)}_{t\geq 0}$ is the process (with state-space according to Figure \ref{fig:invalidemodel}) indicating the state of the insured and $N_{ij}$ is the associated counting process counting the number of jumps from state $i$ to $j$, for $i,j\in \tuborg{0,1,2}$, $i\neq j$.

The present value at initiation of the contract (at time 0) of future benefits up to some terminal time point $t>0$ is then given by
\begin{align*}
U(0,t) = \int_0^t e^{-\int_0^s r(v)\dd v} \dd B(s),
\end{align*}   
where $r = \{r(v)\}_{v\geq 0}$ is some deterministic interest rate that accounts for time value of money. Based on a principle of diversification of unsystematic biometric risk, the prospective reserve is given as the expected present value, i.e.\ $V(0,t) = \E\!\left[U(0,t)\right]$, which satisfies the well-known Thiele's differential equation, cf.\ \cite{hoem69}.\ This principle relies on the strong law of large numbers on i.i.d.\ present values (coming from i.i.d.\ insured with same product combination), $U^1(0,t),U^2(0,t),\ldots$, 
 \begin{align*}
 \frac{1}{N}\sum_{i=1}^N U^i(0,t) \overset{\ \text{a.s } \ }{\longrightarrow} \E\!\left[U^1(0,t)\right]\!, \quad\text{as}\ \ N\rightarrow\infty. 
 \end{align*}
To avoid a significant risk of future payments deviating from expected values, the insurer adds safety margins to the prospective reserve. Assuming an explicit approach based on the probability distribution of the present value, a natural approach is to use approximations based on the Central Limit Theorem (CLT), cf.\ \cite{christiansen2013}: 
\begin{align*}
\sqrt{N}\!\left(\frac{1}{N}\sum_{i=1}^N U^i(0,t) - \E\!\left[U^1(0,t)\right]\right) \overset{\mathcal{D}}{\longrightarrow} \mathcal{N}\!\left(0,\Var\!\left[U^1(0,t)\right]\right)\!, \ \ \text{as}\ \ N\rightarrow\infty.
\end{align*}
We recall that $\Var\!\left[U^1(0,t)\right]$ can be calculated using multi-state versions of Hattendorff's theorem, cf.\ \cite{Ramlau1988}. One may even turn to higher order moments beyond the variance to approximate these distributions further, cf.\ \cite{NorbergHigherOrder, Bladt2020}.      

Now, since the different payment types work in different directions w.r.t.\ the underlying mortality and disability risk, the safety margins on the individual product types are dependent. We are therefore interested in computing joint safety margins that suitably represents decompositions of these type of risks. This we can obtain by decomposing the payment process $B$ into the three types corresponding to $(1)-(3)$ above: 
\begin{alignat*}{2}
B(t) &= B_1(t) + B_2(t) + B_3(t), && \\[0.2 cm]
\dd B_1(t) &= S\cdot \mathds{1}_{(t<T)}\!\left(\mathds{1}_{(Z(t-)=0)}\dd N_{02}(t) + \mathds{1}_{(Z(t-)=1)} \dd N_{12}(t)\right),  \quad &&B_1(0) = 0, \\[0.2 cm]
\dd B_2(t) &= b\cdot \mathds{1}_{(t\geq T)}\!\left(\mathds{1}_{(Z(t-)=0)} + \mathds{1}_{(Z(t-)=1)} \right)\!\dd t,  &&B_2(0) = 0, \\[0.2 cm]
\dd B_3(t) &= d\cdot \mathds{1}_{(t<T)}\mathds{1}_{(Z(t-)=1)}\dd t, &&B_3(0) = 0,
\end{alignat*}
and where the present value of the payment processes $B_1,B_2$ and $B_3$ are coupled in a vector: $\bm{U}(0,t) = \left(U_1(0,t),U_2(0,t),U_3(0,t)\right)\!'$.\ The joint distribution of $\bm{U}(0,t)$ now becomes the key to obtain joint safety margins related to the different product types. With i.i.d. present values $\bm{U}^1(0,t),\bm{U}^2(0,t),\ldots$, the multivariate CLT now applies     
\begin{align*}
&\sqrt{N}\!\left(\frac{1}{N}\sum_{i=1}^N \bm{U}^i(0,t) - \E\!\left[\bm{U}^1(0,t)\right]\right) \overset{\mathcal{D}}{\longrightarrow} \mathcal{N}\!\left(\bm{0},\bm{\Sigma}_{\bm{U}^1(0,t)} \right)\!, \ \ \text{as}\ \ N\rightarrow\infty, \\[0.4 cm]
&\bm{\Sigma}_{\bm{U}^1(0,t)} = \left\{\Cov\!\left(U^1_i(0,t),U^1_j(0,t)\right) \right\}_{i,j=1,2,3}.
\end{align*}
Hence, the covariance matrix of $\bm{U}^1(0,t)$ allows for approximation of joint safety margins for the three product types. Again, turning to higher order moments of $\bm{U}^1(0,t)$ in general improves these approximations in the same way as for the univariate case. The focal point of this paper is to compute these higher order moments for multivariate present values in a general multi-state framework, and give special attention to pair-wise covariances as motivated by this example. 

We should like to mention that while this example concerned joint safety margins on the unsystematic biometric risk in life and disability insurance, one could have used the same model for e.g.\ systematic mortality risk. This is obtained by thinking of the states as demographic states, say, where e.g.\ the disability state represents a situation where the overall mortality for all insured in the portfolio have increased (or decreased). Then we would obtain joint safety margins related to systematic mortality risk when considering the joint distribution of the multivariate present value (a modification of the payment processes might be necessary here).

\section{Setup}\label{sec:setup}
We now proceed with the general setup of the paper. Subsection \ref{sec:prelim} introduces the process governing the state of the insured and the multivariate payment process describing the collection of products held by the insured. Subsection \ref{sec:valuation} then considers the multivariate present value of future payments, whose components consists of the present value of the individual payments, and we introduce the calculation of their higher order moments as the main purpose of the paper. By considering existing results in \cite{NorbergHigherOrder, Bladt2020} for single payment processes, we end the section by explaining our main ideas and give an overview on how we are to obtain the multivariate extension in this paper.

\textbf{Notation and conventions} \ For vectors $\bm{x},\bm{y}\in \R^n$,  we say that $\bm{x}\leq \bm{y}$ if $x_\ell \leq y_\ell$ for all $\ell=1,\ldots,n$. Furthermore, we denote with $\bar{\bm{x}}$ the sum of the elements in $\bm{x}$, i.e.\ 
$
\bar{\bm{x}} = \sum_{\ell=1}^n x_\ell 
$.\ Also, $\bm{e}_\ell$ is the $\ell$'th unit vector in $\R^n$ and 
$
E_n = \tuborg{\bm{e}_1,\ldots,\bm{e}_n}
$
denotes the set of unit vectors, i.e.\ the natural basis of $\R^n$.\ Furthermore, $\ip{\bm{x}}{\bm{y}} = \sum_{\ell=1}^n x_\ell y_\ell$ denotes the usual inner product. 

For $n\times n$ matrix valued functions $\R\ni x\mapsto \bm{A}(x)$ we denote its product integral in the (time) interval ${\color{blue} ( }s,t]$ as  
\begin{align*}
\bm{F}(s,t) = \prod_s^t \left(\bm{I}+\bm{A}(x)\dd x\right)\!,
\end{align*}  
whenever it exists; here $\bm{I}$ is the $n\times n$ identity matrix. For a survey on product integration and its properties, we refer to \cite{GillJohansen, Johansen} and the applications to life insurance in \cite{milbrodtstracke1997, Bladt2020}. Numerical schemes for calculation of product integrals can be found in \cite{HeltonStucwisch}.  

For the purpose of computing multivariate higher order moments, we consider the map $S:\N_0^n \rightarrow \mathpzc{P}(\N_0^n)$ defined as
\begin{equation}\label{eq:Sk}
S({\bm{x}}) = \left\{ \bm{\xi}\in \N_0^n\setminus\tuborg{\bm{0}} \, \big| \, \bm{\xi}\leq \bm{x}\right\}\!, \quad \bm{x}\in \N_0^n. 
\end{equation}
The set $S(\bm{x})$ then contains all combinations of lower order moments of $\bm{x}$, excluding the $\bm{0}$'th moment, and the cardinality $\big|S({\bm{x}})\big| = \prod_{\ell=1}^n (x_\ell+1)-1$ then represents the amount of these that appear; whenever we wish to include the $\bm{0}$'th moment, we write $\widetilde{S}(\bm{k}) = S(\bm{k})\cup \tuborg{\bm{0}}$.\ For notational convenience, we simply define the mapping $|\cdot|:\N_0^n \rightarrow \N_0$ as 
\begin{equation*}
|\bm{x}| := \prod_{\ell=1}^n (x_\ell+1)-1, \quad \bm{x}\in \N_0^n. 
\end{equation*}
To compute higher order partial derivatives of multivariate scalar functions $f:\R^n\rightarrow \R$, we use the following notation, for $\bm{m}\in \N_0^n$,
\begin{align*}
\frac{\partial^{\bm{m}}}{\partial\bm{x}^{\bm{m}}} f(\bm{x}) = \frac{\partial^{\bar{\bm{m}}}}{\partial x_1^{m_1}\partial x_2^{m_2}\cdot\ldots\cdot \partial x_n^{m_n}} f(\bm{x})
\end{align*}
provided they exist.

\subsection{Life insurance model}\label{sec:prelim}
The state of the insured is governed by a non-explosive Markov jump process $Z=\tuborg{Z(t)}_{t\geq 0}$ taking values on a finite state space $\J=\{0,\ldots,J-1\}$, $J\in \N$, with deterministic initial state $Z(0)=z_0\in \J$.\ Denote with $\bm{M}(t) = \tuborg{\mu_{ij}(t)}_{i,j\in \J}$ the transition intensity matrix of $Z$ and let $\bm{P}(s,t) = \tuborg{p_{ij}(s,t)}_{i,j\in \J}$ be the corresponding transition probability matrix. The multivariate counting process associated with $Z$ is denoted $\bm{N}(t)=\tuborg{N_{ij}(t)}_{i,j\in \J, i\neq j}$, with components given by
\begin{equation*}
N_{ij}(t) = \#\tuborg{s\in (0,t] : Z(s-) = i, Z(s) = j}.
\end{equation*} 
We consider a collection of life insurance contracts described by the $n$-dimensional payment process $\bm{B}(t) = \left(B_1(t),\ldots,B_n(t) \right)'$ with each component giving accumulated benefits less premiums for the corresponding contract; here $n\in \N$ is a fixed and finite number of contracts. We suppose that the $\ell$'th payment process, $\ell=1,\ldots,n$, consist of deterministic sojourn payment rates $b_i^\ell$ and transition payments $b_{ij}^\ell$, $i,j\in \J, i\neq j$, in terms of the state process $Z$.\ Coupling these in vectors, 
\begin{align}\label{eq:b_j}
\bm{b}_i(t) &= \left(b_i^1(t),\ldots,b_i^n(t)\right)', \\\label{eq:b_jk}
\bm{b}_{ij}(t) &= (b^1_{ij}(t),\ldots,b^n_{ij}(t))',
\end{align}
we formally characterize $\bm{B}$ as
\begin{equation} \label{eq:payment_pr}
\dd \bm{B}(t) = \sum_{i\in \J} \mathds{1}_{(Z(t-)=i)}\bigg( \bm{b}_{i}(t)\!\dd t + \sum_{j:j\neq i} \bm{b}_{ij}(t)\!\dd N_{ij}(t)\bigg), \quad \bm{B}(0)=\bm{0}. 
\end{equation}
We assume throughout that the transition rates $\mu_{ij}$ as well as the payment functions $b^\ell_i$, $b_{ij}^\ell$, for $i,j\in \J$, $i\neq j$, and $\ell=1,\ldots,n$, are bounded on finite intervals. This makes all integrals and expectations in the following well defined and finite.  

We think of $\bm{B}$ as describing payments with different product types that naturally are dependent. The simplest example is the combination of life annuities and death benefit, which have opposite effects with respect to the mortality risk on lifetimes.\ In the general multi-state framework of this paper, we are able to examine product interactions that are less trivial; cf.\ the motivating example in Section \ref{sec:motivation}.
\begin{example}[Disability model]\label{ex:main_payment_functions}
The setup of the motivating example from Section \ref{sec:motivation} in the disability model corresponds to having the   following vectors of payments functions \eqref{eq:b_j}--\eqref{eq:b_jk}:
\begin{align*}
\bm{b}_0(t) &= \left(0, b\cdot \mathds{1}_{(t\geq T)}, 0\right)', \\
\bm{b}_1(t) &= \left(0, b\cdot \mathds{1}_{(t\geq T)}, d\cdot \mathds{1}_{(t<T)} \right)', \\
\bm{b}_{02}(t)=\bm{b}_{12}(t) &=  \left(S\cdot\mathds{1}_{(t<T)}, 0, 0  \right)',  
\end{align*}
with the remaining vectors of payment functions having purely zero entries. \demo
\end{example}
\subsection{Present values and their moments}\label{sec:valuation}
Let $r=\tuborg{r(t)}_{t\geq 0}$ be a continuous, suitably regular and deterministic short rate.\ To account for time value of money, introduce $v(s,t)$ as the discount factor for a payment at time $t$ valuated at time $s\leq t$, i.e.\  
\begin{align*}
v(s,t) = e^{-\int_s^t r(x)\dd x}. 
\end{align*}
The present value at time $s$ of future benefits less premiums up to time $t$ for the payment process $\bm{B}$ is then given by 
\begin{align}\label{eq:multi_U}
\bm{U}(s,t)   &= \int_s^t v(s,u) \dd \bm{B}(u) = \left(U_1(s,t),\ldots,U_n(s,t)\right)',
\end{align}
where the components $U_\ell(s,t) = \int_s^t v(s,u) \dd B_\ell(u)$, $\ell=1,\ldots,n$, are the present values of the individual payment process. 

The components of $\bm{U}$ are (by construction) seen to be dependent, which introduces joint effects on the payments with respect to the  underlying event risk. To analyse these dependence structures, we wish to examine its joint distribution
by computing its conditional higher order moments given the current state of the insured, that is, by computing
\begin{align}\label{eq:Vi}
V^{(\bm{k})}_i(s,t) &= \E\!\left[\left. \, \prod_{\ell = 1}^n U_\ell(s,t)^{k_\ell} \, \right| Z(s) = i\right]\!
\end{align}
for any $i\in \J$ and $\bm{k}=(k_1,\ldots,k_n)\in \N_0^n$.\ For $\bm{k} = \bm{e}_\ell$ we have that $V_i^{(\bm{e}_\ell)}(s,t)$ is the state-wise prospective reserve for the $\ell$'th payment process, calculated using the classic Thiele's differential equation, cf.\ \cite{hoem69}.    

 As mentioned in the introduction, having multiple product types for the same insured is usually handled by adding all payment functions together as $\bar{\bm{b}}_i$ and $\bar{\bm{b}}_{ij}$, or, equivalently, by considering the aggregate present value $\bar{\bm{U}}(s,t)$.\ This effectively gives a one dimensional payment process to examine; in \cite{hoem69} we have Thiele's differential equation for conditional expectations and in \cite{AdekambiChristiansen2017,  Bladt2020,NorbergHigherOrder} we have differential equations for conditional higher order moments, which e.g.\ allows for calculation of safety margins on reserves as in \cite{christiansen2013}. The present paper is therefore to be thought of as a way of decomposing the aggregated payments into the different product types and study the joint distribution of their future payments, which then would allow for calculation of joint safety margins for the different product types as outlined in Section \ref{sec:motivation}.

The following matrices, which we think of as inputs to the computations, are needed in the derivations; for $\ell \in \{ 1,\ldots,n\}$, $i,j\in \J, i\neq j$, and $\bm{y}\in \N_0^n\setminus (E_n\cup \tuborg{\bm{0}})$, let
\begin{align}\label{eq:matrix_first}
\bm{B}_\ell(s) &= \left\{b^\ell_{ij}(s)\right\}_{i,j\in \J}, \\
\bm{b}^\ell(s) &= \left(b^\ell_0(s),\ldots,b^\ell_{J+1}(s)\right)', \\ 
 \label{eq:matrix_mellem_plusone}
\bm{R}_\ell (s) &= \bm{M}(s)\bullet \bm{B}_\ell(s) + \bm{\Delta}(\bm{b}^\ell(s)), \\ \label{eq:matrix_last}
\bm{C}^{(\bm{y})}(s) &= \bm{M}(s)\bullet \bm{B}^{\bullet y_1}_1(s)\bullet \ldots\bullet \bm{B}_n^{\bullet y_n}(s),
\end{align}
where $\Delta(\bm{b}^\ell(s))$ denotes the diagonal matrix with the vector $\bm{b}^\ell(s)$ as diagonal elements, $\bullet$ denotes the entry-wise matrix product and e.g.\ $\bm{B}_1^{\bullet y}(s)$ denotes the entry-wise matrix product of $\bm{B}_1(s)$ with itself $y$ times; here we use the convention that e.g.\ $\bm{B}_1^{\bullet 0}(s)$ denotes the matrix with all entries equal to 1.\ These matrices are defined analogously to those of \cite[(3.8)--(3.11)]{Bladt2020} with the relevant extension to a multivariate payment process.  

The remainder of the paper now focuses on computation of $\bm{V}_i^{(\bm{k})}(s,t)$, carried out as follows. Following \cite{Bladt2020}, we first compute the conditional higher order moments on events that the insured terminates in specific states, given by, for $j\in \J$,   
\begin{align}\label{eq:Vij}
V_{ij}^{(\bm{k})}(s,t) = \E\!\left[\left. \mathds{1}_{(Z(t)=j)}\prod_{\ell = 1}^n U_\ell(s,t)^{k_\ell} \, \right| Z(s) = i\right]
\end{align} 
with corresponding matrix
\begin{align}\label{eq:V}
\bm{V}^{(\bm{k})}(s,t) = \left\{V_{ij}^{(\bm{k})}(s,t)\right\}_{i,j\in \J}.
\end{align}
This is carried out by deriving the relevant moment generating function.\ As in \cite{Bladt2020}, we shall refer to these as the \textit{conditional partial moments} of $\bm{U}(s,t)$.\ From these, the conditional moments $V_{i}^{(\bm{k})}(s,t)$ are obtained via the relation $\sum_{j\in \J} V_{ij}^{(\bm{k})}(s,t) = V_{i}^{(\bm{k})}(s,t)$,  such that
\begin{align}\label{eq:Vi_expr}
V_{i}^{(\bm{k})}(s,t) = \tuborg{\bm{V}^{(\bm{k})}(s,t)\bm{1}}_i, 
\end{align}  
where $\bm{1}=(1,1,\ldots,1)'$.\  

For the first order moment $k=1$ (in the one-dimensional case $n=1$), the partial reserves \eqref{eq:Vij} have already been introduced in existing literature; they coincide with the conditional prospective premium reserves introduced by \cite{Wolthuis1992}.\ In \cite{norberg1991}, we also encounter these type of quantities for corresponding retrospective reserves initiated at time zero, i.e.\ for $s=0$.

These earlier treatments in the literature of these kinds of quantities justifies the conditional partial moments \eqref{eq:V} as relevant objects to study on their own. In this paper, however, they are primarily introduced as a mathematical convenient object towards obtaining the conditional moments \eqref{eq:Vi}; this is the same focus in \cite{Bladt2020}.\

\section{Moment generating functions}\label{sec:mgf}
Consider the multivariate present value $\bm{U}(s,t)$. The moment generating function of its conditional distribution given $Z(s) = i$ on the event $(Z(t)=j)$, $i,j\in \J$,  is given by
\begin{align}\label{eq:Fij_def}
F_{ij}(\bm{\theta}; s,t) &= \E\!\left[\left. e^{\ip{\bm{\theta}}{ \bm{U}(s,t) } }\mathds{1}_{(Z(t)=j)}\, \right| Z(s)=i\right]\!, \quad \bm{\theta} \in \R^n,
\end{align}
with corresponding matrix
\begin{align*}
\bm{F}(\bm{\theta}; s,t) &= \tuborg{F_{ij}(\bm{\theta}; s,t)}_{i,j\in \J}.
\end{align*}
We are then able to obtain the moment generating function by a direct application of results in \cite{Bladt2020}.
\begin{theorem}\label{thm:mgf}
The joint distribution of the multivariate present value $\bm{U}(s,t)$ has  moment generating function given by 
\begin{align*}
\bm{F}(\bm{\theta}; s,t) = \prod_s^t \left(\bm{I} + \left[\bm{M}(u)\bullet \tuborg{e^{v(s,u)\ip{ \bm{\theta} }{ \bm{b}_{ij}(u)}}}_{i,j\in \J}+v(s,u)\sum_{\ell = 1}^n \theta_\ell \bm{\Delta}(\bm{b}^\ell(u))  \right]\!\!\dd u \right) 
\end{align*}
\end{theorem}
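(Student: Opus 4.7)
The plan is to reduce the multivariate moment generating function to a univariate one so that the product integral representation in \cite{Bladt2020} can be applied essentially verbatim. The key observation is that the inner product $\ip{\bm{\theta}}{\bm{U}(s,t)}$ is itself a scalar present value of a scalar payment process whose payment functions are linear combinations, indexed by $\bm{\theta}$, of the components of $\bm{b}_i$ and $\bm{b}_{ij}$.

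First I would fix $\bm{\theta}\in \R^n$ and define a scalar (one-dimensional) payment process $\tilde{B}^{\bm{\theta}}$ with sojourn payment rates $\tilde{b}^{\bm{\theta}}_i(u) = \ip{\bm{\theta}}{\bm{b}_i(u)} = \sum_{\ell=1}^n \theta_\ell b^\ell_i(u)$ and transition payments $\tilde{b}^{\bm{\theta}}_{ij}(u) = \ip{\bm{\theta}}{\bm{b}_{ij}(u)}$, for $i,j\in\J$, $i\neq j$. Since $\bm{U}(s,t) = \int_s^t v(s,u)\dd\bm{B}(u)$ and the inner product is linear, interchanging the integral and the inner product yields
\begin{align*}
\ip{\bm{\theta}}{\bm{U}(s,t)} = \int_s^t v(s,u)\dd \tilde{B}^{\bm{\theta}}(u) =: \tilde{U}^{\bm{\theta}}(s,t),
\end{align*}
so that $\tilde{U}^{\bm{\theta}}(s,t)$ is a univariate present value in the sense of \cite{Bladt2020}. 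The boundedness conditions on $b^\ell_i$ and $b^\ell_{ij}$ from Subsection \ref{sec:prelim} immediately carry over to $\tilde{b}^{\bm{\theta}}_i$ and $\tilde{b}^{\bm{\theta}}_{ij}$, so the regularity assumptions in \cite{Bladt2020} are met.

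Next, I would rewrite
\begin{align*}
F_{ij}(\bm{\theta};s,t) = \E\!\left[\left. e^{1\cdot \tilde{U}^{\bm{\theta}}(s,t)} \mathds{1}_{(Z(t)=j)}\,\right|Z(s)=i\right]\!,
\end{align*}
and identify this as the corresponding univariate conditional partial mgf of $\tilde{U}^{\bm{\theta}}(s,t)$ evaluated at the dummy variable $1$. Applying the scalar product integral representation from \cite{Bladt2020} then yields
\begin{align*}
\bm{F}(\bm{\theta};s,t) = \prod_s^t\!\left(\bm{I} + \left[\bm{M}(u)\bullet \tuborg{e^{v(s,u)\tilde{b}^{\bm{\theta}}_{ij}(u)}}_{i,j\in\J} + v(s,u)\bm{\Delta}(\tilde{\bm{b}}^{\bm{\theta}}(u))\right]\!\dd u\right)\!.
\end{align*}

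Finally I would substitute back the definitions of $\tilde{b}^{\bm{\theta}}_{ij}$ and $\tilde{b}^{\bm{\theta}}_i$, using that $\bm{\Delta}(\tilde{\bm{b}}^{\bm{\theta}}(u)) = \sum_{\ell=1}^n \theta_\ell \bm{\Delta}(\bm{b}^\ell(u))$ by linearity of the $\bm{\Delta}(\cdot)$ map, to recover the stated formula. The only potential obstacle is verifying that the scalar present value construction truly coincides in distribution with the object appearing in \cite{Bladt2020} — in particular that the associated scalar transition payments and sojourn rates satisfy the same structural form as assumed there — but this is essentially bookkeeping once the linearity of $\ip{\bm{\theta}}{\cdot}$ is exploited, and no genuinely new analytic step beyond what is already done in \cite{Bladt2020} is required.
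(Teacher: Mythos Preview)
Your proposal is correct and follows essentially the same route as the paper: reduce to the scalar case by observing that $\ip{\bm{\theta}}{\bm{U}(s,t)}$ is the present value of a one-dimensional payment process with sojourn rates $\ip{\bm{\theta}}{\bm{b}_i(\cdot)}$ and transition payments $\ip{\bm{\theta}}{\bm{b}_{ij}(\cdot)}$, then invoke \cite[Theorem 3]{Bladt2020}. The only point the paper makes slightly more explicit is that the discount factor $v(s,u)$ is absorbed into the (time-dependent) payment functions so that the cited result for \emph{undiscounted} rewards applies directly, evaluated at the dummy variable $1$; you handle this implicitly in your displayed formula, so no real gap remains.
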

\begin{proof}
Let $\bm{\theta}\in \R^n$ and $0\leq s\leq t$ be given. By linearity of integrals, we note 
\begin{align*}
\ip{\bm{\theta}}{\bm{U}(s,t)} &= \int_s^t v(s,u) \dd \widetilde{B}_{\bm{\theta}}(u), \\
\dd \widetilde{B}_{\bm{\theta}}(u) &=
\sum_{i\in \J}\1{Z(u-)=i}\bigg(\ip{\bm{\theta}}{ \bm{b}_{i}(u) }\!
 \dd u + \sum_{j:j\neq i} \ip{\bm{\theta}}{\bm{b}_{ij}(u)}\!\dd N_{ij}(u)\bigg).
\end{align*}
Hence, $\ip{\bm{\theta}}{ \bm{U}(s,t) }$ is the present value of a one-dimensional payment process with sojourn payment rates $\ip{\bm{\theta}}{ \bm{b}_j(\cdot) }$ and transition payments $\ip{\bm{\theta}}{ \bm{b}_{jk}(\cdot)}$.\ Thus, in the spirit of \cite{Bladt2020}, we may view $\bm{F}(\bm{\theta}; s,t)$ as the moment generating function of a total (undiscounted) reward evaluated in 1 with the payment functions $u\mapsto \ip{\bm{\theta}}{ v(s,u)\bm{b}_j(u) } $ and $u\mapsto \ip{\bm{\theta}}{ v(s,u)\bm{b}_{jk}(u) }$ (for fixed $s$).\ The result then follows from an application of \cite[Theorem 3]{Bladt2020}.    
\end{proof}
\begin{remark}
Theorem \ref{thm:mgf} is a generalization of \cite[Theorem 3]{Bladt2020} to multivariate payment processes with discounting.\ Indeed, by letting $n=1$ and $r(s) = 0$ for all $s\geq 0$, the result simplifies to, suppressing the superfluous dependency in $\ell$, 
\begin{align*}
\bm{F}(\theta; s,t) = \prod_s^t \left( \bm{I} + \left[\bm{M}(s)\bullet \tuborg{e^{\theta b_{ij}(s)}}_{i,j\in \J}+\theta \bm{\Delta}(\bm{b}(s))  \right]\!\!\dd u\right)\!,
\end{align*}  
which is the product integral of \cite[Theorem 3]{Bladt2020}. \demoo
\end{remark}
From the product integral representation of $\bm{F}$, we are able to obtain a partial differential equation satisfied by $\bm{F}$. Similar type of equations are also obtained in \cite{AdekambiChristiansen2017}. 
\begin{theorem}\label{thm:mgf_pde}
The moment generating function $\bm{F}$ satisfies the system of partial differential equations 
\begin{equation}
\begin{aligned} \label{eq:mgf_pde}
\frac{\partial}{\partial s} \bm{F}(\bm{\theta}; s,t) &= r(s)\sum_{\ell=1}^n \theta_\ell \frac{\partial }{\partial \theta_\ell}\bm{F}(\bm{\theta}; s,t) \\ &\quad -\left[\bm{M}(s)\bullet \tuborg{e^{\ip{\bm{\theta}}{\bm{b}_{ij}(s)}}}_{i,j\in \J}+\sum_{\ell  = 1}^n \theta_\ell \bm{\Delta}(\bm{b}^\ell(s))  \right]\!\bm{F}(\bm{\theta}; s,t), 
\end{aligned}
\end{equation}
with the terminal condition $\bm{F}(\bm{\theta}; t,t) = \bm{I}$. 
\end{theorem}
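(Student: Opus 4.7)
My plan is to use the product integral representation of $\bm F$ from Theorem \ref{thm:mgf} and differentiate with respect to $s$. The subtlety is that the integrand depends on $s$ through $v(s,u)$, so a standard backward Kolmogorov equation for product integrals does not apply directly. Denote the integrand in Theorem \ref{thm:mgf} by $\bm A(\bm\theta, s; u)$. For $h > 0$ small, I split the product integral multiplicatively as
\begin{align*}
\bm F(\bm\theta; s, t) = \left[\prod_s^{s+h}\!\bigl(\bm I + \bm A(\bm\theta, s; u)\dd u\bigr)\right] \prod_{s+h}^{t}\!\bigl(\bm I + \bm A(\bm\theta, s; u)\dd u\bigr).
\end{align*}

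The key observation is that $v(s, u) = v(s, s+h) v(s+h, u)$ for all $u \geq s+h$, so the integrand of the second factor coincides with that of the MGF evaluated at $(v(s,s+h)\bm\theta;\, s+h, u)$; hence the second factor equals $\bm F(v(s,s+h)\bm\theta; s+h, t)$. Since $v(s,s) = 1$, the first factor equals $\bm I + h \bm A_0(\bm\theta; s) + o(h)$, where $\bm A_0(\bm\theta; s) := \bm M(s)\bullet\{e^{\ip{\bm\theta}{\bm b_{ij}(s)}}\}_{i,j\in\J} + \sum_{\ell=1}^n \theta_\ell \bm\Delta(\bm b^\ell(s))$. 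Rearranging, dividing by $h$, and Taylor-expanding the left-hand side in $(s, \bm\theta)$ via the chain rule — using $(v(s, s+h) - 1)/h \to -r(s)$ as $h \to 0$ — then gives
\begin{align*}
-\partial_s \bm F(\bm\theta; s, t) + r(s)\sum_{\ell=1}^n \theta_\ell \partial_{\theta_\ell}\bm F(\bm\theta; s, t) = \bm A_0(\bm\theta; s)\bm F(\bm\theta; s, t),
\end{align*}
which after rearrangement is exactly \eqref{eq:mgf_pde}. The terminal condition $\bm F(\bm\theta; t, t) = \bm I$ is immediate from the empty product integral, or equivalently from \eqref{eq:Fij_def} evaluated at $s = t$.

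The principal obstacle is the $s$-dependence in the integrand of the product integral, which rules out a direct backward-equation argument. The whole proof hinges on recognising that the factoring $v(s,u) = v(s, s+h)v(s+h, u)$ converts this $s$-dependence into a perturbation of the $\bm\theta$-argument via $\bm\theta \mapsto v(s, s+h)\bm\theta$; the chain rule then turns that perturbation into the first-order transport term $r(s)\sum_\ell \theta_\ell \partial_{\theta_\ell}\bm F$ appearing in the PDE. An entirely probabilistic alternative would be to bypass Theorem \ref{thm:mgf} and instead start from the identity $\bm U(s,t) = \bm U(s, s+h) + v(s, s+h)\bm U(s+h, t)$ combined with the Markov property; this yields the same infinitesimal relation, but the product-integral approach above reuses Theorem \ref{thm:mgf} more economically.
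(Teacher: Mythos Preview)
Your argument is correct and takes a genuinely different route from the paper's proof. The paper expands $\bm{F}$ via its Peano--Baker series, differentiates term by term in $s$ using Leibniz' rule, and then invokes the identity $\partial_s \bm{A}(s,u;\bm{\theta}) = r(s)\sum_\ell \theta_\ell\,\partial_{\theta_\ell}\bm{A}(s,u;\bm{\theta})$ inside each term to convert the residual $s$-derivatives into $\theta$-derivatives; uniform convergence of the series is used to justify the interchanges. Your approach instead exploits multiplicativity of the product integral together with the scaling identity $\bm{A}(\bm{\theta},s;u)=\bm{A}(v(s,s+h)\bm{\theta},s+h;u)$ for $u\ge s+h$, which absorbs the $s$-dependence of the integrand into a rescaling of $\bm{\theta}$ and reduces everything to a first-order Taylor expansion and the chain rule. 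This is more elementary (no series manipulations, no uniform convergence arguments) and makes the origin of the transport term $r(s)\sum_\ell\theta_\ell\partial_{\theta_\ell}\bm{F}$ transparent; the paper's approach, on the other hand, is more systematic and would generalise more mechanically to integrands whose $s$-dependence does not factor through a simple scalar multiplier of $\bm{\theta}$.
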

\begin{proof}
For $\bm{\theta}\in \R^n$ and $0\leq s\leq u$, let 
\begin{align*}
\bm{A}(s,u;\bm{\theta}) = \bm{M}(u)\bullet \tuborg{e^{v(s,u)\ip{ \bm{\theta} }{ \bm{b}_{ij}(u)}}}_{i,j\in \J}+v(s,u)\sum_{\ell = 1}^n \theta_\ell \bm{\Delta}(\bm{b}^\ell(u))
\end{align*}
denote the matrix that is product integrated in Theorem \ref{thm:mgf}.\ It satisfies the relation $\frac{\partial}{\partial s}\bm{A}(s,u;\bm{\theta}) = r(s)\sum_{\ell = 1}^n \theta_\ell \frac{\partial}{\partial \theta_\ell}\bm{A}(s,u;\bm{\theta})$, which can be seen from direct calculations. Now, since $\bm{F}$ is a product integral of the matrix function $\bm{A}(s,\cdot;\bm{\theta})$, it has the Peano-Baker series representation given by
\begin{align*}
\bm{F}(\bm{\theta}; s, t) = \bm{I}+\sum_{n=1}^\infty \int_s^t \int_s^{x_n}\cdots \int_s^{x_2} \bm{A}(s,x_1;\bm{\theta})\cdots \bm{A}(s,x_n;\bm{\theta})\dd x_1\cdots\dd x_n.
\end{align*}
Differentiating this w.r.t.\ $s$, we get using Lebniz' integral rule, 
\begin{align*}
\frac{\partial}{\partial s} \bm{F}(\bm{\theta}; s, t) &= -\bm{A}(s,s;\bm{\theta})\bm{F}(\bm{\theta}; s, t) \\
&\quad +\sum_{n=1}^\infty \int_s^t \int_s^{x_n}\cdots \int_s^{x_2} \frac{\partial}{\partial s}\left(\bm{A}(s,x_1;\bm{\theta})\cdots \bm{A}(s,x_n;\bm{\theta})\right)\!\dd x_1\cdots\dd x_n
\end{align*}
In the last differentiation, we can in each combination (when applying the product rule) substitute the derivative w.r.t.\ to $s$ with $r(s)\sum_{\ell = 1}^n \theta_\ell \frac{\partial}{\partial \theta_\ell}\bm{A}(s,u;\bm{\theta})$.\ This gives  
\begin{align*}
\frac{\partial}{\partial s} \bm{F}(\bm{\theta}; s, t) &= -\bm{A}(s,s;\bm{\theta})\bm{F}(\bm{\theta}; s, t) \\
& +\sum_{n=1}^\infty \int_s^t \int_s^{x_n}\cdots \int_s^{x_2}r(s)\!\sum_{\ell = 1}^n \theta_\ell \frac{\partial}{\partial \theta_\ell}\!\left(\bm{A}(s,x_1;\bm{\theta})\cdots \bm{A}(s,x_n;\bm{\theta})\right)\!\dd x_1\cdots\dd x_n \\[0.2 cm]
&= -\bm{A}(s,s;\bm{\theta})\bm{F}(\bm{\theta}; s, t) +r(s)\sum_{\ell = 1}^n \theta_\ell \frac{\partial}{\partial \theta_\ell}\bm{F}(\bm{\theta}; s, t),
\end{align*}
as claimed. The boundary condition follows from the definition of product integrals. Throughout the proof, we have used that the Peano-Baker series converges uniformly on compact intervals in order to interchange summation and differentiation.  
\end{proof}
\begin{remark}
Consider the univariate case $n=1$ and suppress the superfluous dependency in $\ell$.\ By multiplying both sides of \eqref{eq:mgf_pde} with $\bm{1} = (1,1,\ldots,1)'$ and extracting the $i$'th element, $i\in \J$, we see that $F_i(\theta; s,t) := \E\!\left[\left. e^{\theta U(s,t)} \right|  Z(s) = i\right]$ satisfies the partial differential equation  
\begin{align*}
\frac{\partial}{\partial s} F_i(\theta; s,t) &= r(s)\theta \frac{\partial}{\partial \theta} F_i(\theta; s,t) - \theta b_i(s)F_i(\theta; s,t) \\
&\quad -\sum_{k:k\neq i} \mu_{ik}(s)\left(e^{\theta b_{ik}(s) }F_{k}(\theta; s,t) - F_i(\theta;s,t)\right)\!, 
\end{align*}
with the boundary condition $F_i(\theta; t, t) = 1$.\ This is the partial differential equation of \cite[Proposition 4.3]{AdekambiChristiansen2017} in the Markovian special case with no duration dependency.  \demoo  
\end{remark} 
Having derived the relevant moment generating function of the multivariate present value $\bm{U}(s,t)$, we are now ready to compute the corresponding higher order moments. 

\section{Higher order moments}\label{sec:moments}
In this section, we derive the conditional (partial) higher order moments \eqref{eq:Vi} and \eqref{eq:V}, which constitutes the main contribution of the paper. In Subsection \ref{subsec:diff_moments} we derive differential equations for the conditional partial moments \eqref{eq:V} using the moment generating function $\bm{F}(\cdot; s,t)$ derived in Section \ref{sec:mgf}. Then we use this result to derive differential equations for the conditional moments \eqref{eq:Vi} via the relation \eqref{eq:Vi_expr}. In Subsection \ref{subsec:prodint_moments}, we then show how all conditional partial moments up to a given order $\bm{k}\in \N_0^n$ can be obtained from a product integral calculation. Throughout the section, we highlight how the results appear as natural and simple multivariate extensions to results in \cite{Bladt2020, NorbergHigherOrder}, in particular when notation may appear as cumbersome.           

\subsection{Differential equations of conditional (partial) moments}\label{subsec:diff_moments}
We now turn our attention to deriving ordinary differential equations for the conditional (partial) moments. For this, recall that $S(\bm{k})$ denotes the set of all lower order moments of the $\bm{k}$'th moment, excluding the $\bm{0}$'th moment, with elements we may write as
\begin{align*}
S(\bm{k}) = \tuborg{\bm{y}^1,\ldots,\bm{y}^{|\bm{k}|}}.
\end{align*}
The following main result then take use of the moment generating function derived in Section \ref{sec:mgf} to obtain differential equations for the conditional partial moments. 
\begin{theorem}\label{thm:diff_V}
The conditional partial moments $\bm{V}^{(\bm{k})}(\cdot,t)$ satisfies the backward differential equations given by, for $\bm{k}\in \N^n_0$, 
\begin{align}
\begin{split}\label{eq:diff_Vij}
\frac{\partial}{\partial s} \bm{V}^{(\bm{k})}(s,t) &= \left(\bar{\bm{k}}\, r(s)\bm{I} - \bm{M}(s)\right)\!\bm{V}^{(\bm{k})}(s,t) - \sum_{\ell=1}^n k_\ell\bm{R}_\ell(s)\bm{V}^{(\bm{k}-\bm{e}_\ell)}(s,t)  \\ 
&\quad - \sum_{\bm{y}\in S({\bm{k}}) \atop \bm{y}\notin  E_n}\prod_{\ell = 1}^n {k_\ell \choose y_\ell}\bm{C}^{(\bm{y})}(s)\bm{V}^{(\bm{k}-\bm{y})}(s,t),
\end{split}
\end{align}
with the terminal conditions $\bm{V}^{(\bm{k})}(t,t)=\mathds{1}_{(\bm{k}=\bm{0})}\bm{I}$.\ \end{theorem}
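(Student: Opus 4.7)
The plan is to identify $\bm{V}^{(\bm{k})}(s,t)$ as the $\bm{k}$-fold partial derivative of the moment generating function at $\bm{\theta}=\bm{0}$,
\[
\bm{V}^{(\bm{k})}(s,t) \;=\; \left.\frac{\partial^{\bm{k}}}{\partial \bm{\theta}^{\bm{k}}}\bm{F}(\bm{\theta}; s,t)\right|_{\bm{\theta}=\bm{0}},
\]
and then obtain the asserted backward ODE by differentiating the PDE of Theorem \ref{thm:mgf_pde} $\bm{k}$ times in $\bm{\theta}$ and evaluating at $\bm{\theta}=\bm{0}$. The interchange of expectation, differentiation in $\bm{\theta}$, and differentiation in $s$ is justified by the standing boundedness assumptions on payments and intensities over finite intervals; combined with the Peano-Baker series representation of $\bm{F}$ used in the proof of Theorem \ref{thm:mgf_pde}, this legitimizes term-wise differentiation.

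For the transport term $r(s)\sum_\ell \theta_\ell\, \partial_{\theta_\ell}\bm{F}$, applying $\frac{\partial^{\bm{k}}}{\partial \bm{\theta}^{\bm{k}}}$ and evaluating at $\bm{\theta}=\bm{0}$ kills every term in the Leibniz expansion except the one in which the single factor $\theta_\ell$ is itself differentiated. Each summand therefore reduces to $k_\ell\,\bm{V}^{(\bm{k})}(s,t)$, and summation over $\ell$ delivers $\bar{\bm{k}}\,r(s)\bm{V}^{(\bm{k})}(s,t)$, matching the $\bar{\bm{k}}\,r(s)\bm{I}$ part of the prefactor in \eqref{eq:diff_Vij}.

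For the coefficient $\bm{E}(\bm{\theta};s)+\bm{D}(\bm{\theta};s)$, where $\bm{E}(\bm{\theta};s) := \bm{M}(s)\bullet \{e^{\ip{\bm{\theta}}{\bm{b}_{ij}(s)}}\}_{i,j\in\J}$ and $\bm{D}(\bm{\theta};s) := \sum_\ell \theta_\ell\,\bm{\Delta}(\bm{b}^\ell(s))$, I apply the multivariate Leibniz rule
\[
\frac{\partial^{\bm{k}}}{\partial \bm{\theta}^{\bm{k}}}\bigl[(\bm{E}+\bm{D})\bm{F}\bigr] \;=\; \sum_{\bm{y}\leq \bm{k}} \prod_{\ell=1}^n\binom{k_\ell}{y_\ell} \frac{\partial^{\bm{y}}}{\partial \bm{\theta}^{\bm{y}}}(\bm{E}+\bm{D}) \cdot \frac{\partial^{\bm{k}-\bm{y}}}{\partial \bm{\theta}^{\bm{k}-\bm{y}}}\bm{F},
\]
and evaluate at $\bm{\theta}=\bm{0}$. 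Since $\bm{D}$ is linear in $\bm{\theta}$, its derivatives contribute only when $\bm{y}=\bm{e}_\ell$ (giving $\bm{\Delta}(\bm{b}^\ell(s))$) or $\bm{y}=\bm{0}$ (giving $\bm{0}$). Differentiating the exponentials in $\bm{E}$ and setting $\bm{\theta}=\bm{0}$ produces entrywise the factor $\prod_\ell (b^\ell_{ij}(s))^{y_\ell}$, so $\frac{\partial^{\bm{y}}}{\partial \bm{\theta}^{\bm{y}}}\bm{E}\bigl|_{\bm{\theta}=\bm{0}}$ equals $\bm{M}(s)$ for $\bm{y}=\bm{0}$, $\bm{M}(s)\bullet\bm{B}_\ell(s)$ for $\bm{y}=\bm{e}_\ell$, and $\bm{C}^{(\bm{y})}(s)$ for $\bm{y}\notin E_n\cup\{\bm{0}\}$.

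Collecting: the $\bm{y}=\bm{0}$ term yields $\bm{M}(s)\bm{V}^{(\bm{k})}(s,t)$, which combines (with opposite sign to the transport term) into the prefactor $(\bar{\bm{k}}\,r(s)\bm{I}-\bm{M}(s))\bm{V}^{(\bm{k})}(s,t)$; the $\bm{y}=\bm{e}_\ell$ contributions assemble into $\sum_\ell k_\ell \bm{R}_\ell(s)\bm{V}^{(\bm{k}-\bm{e}_\ell)}(s,t)$ via the definition \eqref{eq:matrix_mellem_plusone}; and the remaining indices $\bm{y}\in S(\bm{k})\setminus E_n$ produce the final double sum. The terminal condition is immediate from $\bm{F}(\bm{\theta};t,t)=\bm{I}$, which does not depend on $\bm{\theta}$. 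The main difficulty is not analytical but combinatorial bookkeeping: isolating the three regimes $\bm{y}=\bm{0}$, $\bm{y}\in E_n$, and $\bm{y}\in S(\bm{k})\setminus E_n$ so that they line up with the $\bm{M}$-, $\bm{R}_\ell$-, and $\bm{C}^{(\bm{y})}$-terms in the statement, and verifying that the mixed partials of $\bm{E}$ do produce precisely the entrywise $\bullet$-power structure encoded by $\bm{C}^{(\bm{y})}$.
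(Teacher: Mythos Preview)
Your proposal is correct and follows essentially the same route as the paper: identify $\bm{V}^{(\bm{k})}$ as the $\bm{k}$-fold $\bm{\theta}$-derivative of $\bm{F}$ at $\bm{\theta}=\bm{0}$, apply $\partial^{\bm{k}}/\partial\bm{\theta}^{\bm{k}}$ to the PDE of Theorem~\ref{thm:mgf_pde} via the multivariate Leibniz rule, and split the resulting sum over $\bm{y}$ into the cases $\bm{y}=\bm{0}$, $\bm{y}\in E_n$, and $\bm{y}\in S(\bm{k})\setminus E_n$. The paper's proof carries out exactly this computation with the same bookkeeping and the same terminal-condition argument.
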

\begin{proof}
We can use the moment generating function  $\bm{F}(\cdot \, ; s,t)$ to calculate $\frac{\partial}{\partial s}\bm{V}^{(\bm{k})}(s,t)$ as 
\begin{align*}
 \frac{\partial}{\partial s} \bm{V}^{(\bm{k})}(s,t) = \frac{\partial^{\bm{k}}}{\partial \bm{\theta}^{\bm{k}}} \frac{\partial}{\partial s} \bm{F}(\bm{\theta}; s,t)\bigg|_{\bm{\theta} = \bm{0}}.
\end{align*} 
Differentiating both sides of \eqref{eq:mgf_pde}, we get by the generalized product rule    
\begin{align*}
&\frac{\partial^{\bm{k}}}{\partial \bm{\theta}^{\bm{k}}} \frac{\partial}{\partial s} \bm{F}(\bm{\theta}; s,t) \\
&\quad = r(s)\sum_{\ell=1}^n \sum_{\bm{y}\in \widetilde{S}(\bm{k})}\, \prod_{r = 1}^n {k_r \choose y_r}\!\left[\frac{\partial^{\bm{y}}}{\partial \bm{\theta}^{\bm{y}}}\theta_\ell\right] \frac{\partial^{\bm{k}-\bm{y}}}{\partial \bm{\theta}^{\bm{k}-\bm{y}}}\frac{\partial }{\partial \theta_\ell}\bm{F}(\bm{\theta}; s,t)  \\
 &\quad \!-\sum_{\bm{y}\in \widetilde{S}(\bm{k})} \, \prod_{r = 1}^n {k_r \choose y_r}  \frac{\partial^{\bm{y}}}{\partial \bm{\theta}^{\bm{y}}}\!\left[\bm{M}(s)\bullet \tuborg{e^{\ip{\bm{\theta}}{\bm{b}_{ij}(s)}}}_{i,j\in \J}+\sum_{\ell  = 1}^n \theta_\ell \bm{\Delta}(\bm{b}^\ell(s))\right]\! \frac{\partial^{\bm{k}-\bm{y}}}{\partial \bm{\theta}^{\bm{k}-\bm{y}}}  \bm{F}(\bm{\theta}; s,t).  \\ 
&\quad =r(s)\sum_{\ell=1}^n \left[\mathds{1}_{(\bm{y}=\bm{0})}\theta_\ell + k_\ell \mathds{1}_{(\bm{y}=\bm{e}_\ell)} \right] \frac{\partial^{\bm{k}-\bm{y}}}{\partial \bm{\theta}^{\bm{k}-\bm{y}}}\frac{\partial }{\partial \theta_\ell}\bm{F}(\bm{\theta}; s,t)  \\
&\quad -\!\sum_{\bm{y}\in \widetilde{S}({\bm{k}})} \, \prod_{r = 1}^n {k_r \choose y_r}\!\left[\bm{M}(s)\bullet \tuborg{\prod_{\ell = 1}^n b^\ell_{ij}(s)^{y_\ell} e^{\ip{\bm{\theta}}{\bm{b}_{ij}(s)}}}_{i,j\in \J}+\sum_{\ell  = 1}^n \mathds{1}_{(\bm{y}= \bm{e}_\ell)} \bm{\Delta}(\bm{b}^\ell(s))\right]\times \\ 
&\qquad\qquad\qquad\qquad\quad   \frac{\partial^{\bm{k}-\bm{y}}}{\partial \bm{\theta}^{\bm{k}-\bm{y}}}  \bm{F}(\bm{\theta}; s,t).
\end{align*}
In the last equality, we have used that $\prod_{r=1}^n \binom{k_r}{y_r} = k_\ell$ if $\bm{y} = \bm{e}_\ell$ along with $\frac{\partial^{\bm{y}}}{\partial \bm{\theta}^{\bm{y}}} \theta_\ell = \mathds{1}_{(\bm{y}=\bm{0})}\theta_\ell + \mathds{1}_{(\bm{y}=\bm{e}_\ell)}$ for $\ell=1,\ldots,n$.\ Now, evaluating in $\bm{\theta}=\bm{0}$, we get 
\begin{align*}
\frac{\partial}{\partial s} \bm{V}^{(\bm{k})}(s,t) &= r(s)\sum_{\ell=1}^n k_\ell \frac{\partial^{\bm{k}-\bm{e}_\ell}}{\partial \bm{\theta}^{\bm{k}-\bm{e}_\ell}}\frac{\partial }{\partial \theta_\ell}\bm{F}(\bm{\theta}; s,t)\bigg|_{\bm{\theta} = \bm{0}}\\
&\quad -\!\!\!\!\sum_{\bm{y}\in \widetilde{S}({\bm{k}})} \, \prod_{\ell = 1}^n {k_\ell \choose y_\ell}\!\left[\bm{M}(s)\bullet \tuborg{\prod_{\ell = 1}^n b^\ell_{ij}(s)^{y_\ell}}_{i,j\in \J} +\sum_{\ell  = 1}^n \mathds{1}_{(\bm{y}= \bm{e}_\ell)} \bm{\Delta}(\bm{b}^\ell(s))\right]\times \\ 
&\qquad\qquad\qquad\qquad\ \   \bm{V}^{(\bm{k}-\bm{y})}(s,t) \\
&=\bar{\bm{k}}\, r(s)\bm{V}^{(\bm{k})}(s,t) \\
&\quad -\!\!\!\!\sum_{\bm{y}\in \widetilde{S}({\bm{k}})} \prod_{\ell = 1}^n \!{k_\ell \choose y_\ell}\!\!\left[\bm{M}(s)\bullet B_1^{\bullet y_1}(s)\bullet\ldots\bullet B_n^{\bullet y_n}(s) +\sum_{\ell  = 1}^n \mathds{1}_{(\bm{y}= \bm{e}_\ell)} \bm{\Delta}(\bm{b}^\ell(s))\right]\!\times \\ 
&\qquad\qquad\qquad\qquad\ \   \bm{V}^{(\bm{k}-\bm{y})}(s,t).
\end{align*}
Splitting the last sum into terms corresponding to $\bm{y} = \bm{0}$ and $\bm{y}= \bm{e}_\ell$, for $\ell = 1,\ldots,n$, gives 
\begin{align*}
\frac{\partial}{\partial s} \bm{V}^{(\bm{k})}(s,t) &= \left(\bar{\bm{k}}\, r(s)\bm{I}-\bm{M}(s)\right)\!\bm{V}^{(\bm{k})}(s,t) - \sum_{\ell=1}^n k_\ell \bm{R}_\ell(s)\bm{V}^{(\bm{k}-\bm{e}_\ell)}(s,t) \\
&\quad - \sum_{\bm{y}\in S({\bm{k}})\atop \bm{y}\notin  E_n  }  \prod_{\ell = 1}^n {k_\ell \choose y_\ell}\bm{C}^{(\bm{y})}(s)\bm{V}^{(\bm{k}-\bm{y})}(s,t),
\end{align*}
the desired differential equation. The boundary condition follows from \eqref{eq:Vij} with the fact that $\bm{V}^{(\bm{0})}(t,t) = \bm{P}(t,t) = \bm{I}$. 
\end{proof}
\begin{remark}
The differential equation of Theorem  \ref{thm:diff_V} generalizes \cite[Theorem 6]{NorbergHigherOrder} to multivariate present values; for $n=1$ we have, supressing the superfluous dependency in $\ell$, that $S(k) = \tuborg{1,2,\ldots,k}$ and $E_1 = \tuborg{1}$, and so the differential equations reads
\begin{align*}
\frac{\partial}{\partial s} \bm{V}^{(k)}(s,t) &= \left(k\, r(s)\bm{I}-\bm{M}(s)\right)\!\bm{V}^{(k)}(s,t) - k \bm{R}(s)\bm{V}^{(k-1)}(s,t) \\
&\quad - \sum_{y=2}^k {k \choose y}\bm{C}^{(y)}(s)\bm{V}^{(k-y)}(s,t),
\end{align*}
with the terminal conditions $\bm{V}^{(k)}(t,t)=\mathds{1}_{(k=0)}\bm{I}$. \demoo 
\end{remark}
From the differential equation of the conditional partial moments, we immediately obtain differential equations  for the corresponding conditional moments $V_i^{(\bm{k})}(s,t)$ by multiplying the former with $\bm{1} = (1,1,\ldots,1)'$, as explained in Section \ref{sec:setup}.   
\begin{theorem}\label{thm:diff_Vi}
The conditional moments $V_i^{(\bm{k})}(\cdot,t)$ satisfies the backward differential equations given by, for $i\in \J$,
\begin{align} \label{eq:diff_Vi}
\begin{split} \frac{\partial }{\partial s} V_i^{(\bm{k})}(s,t) &= \left(\bm{\bar{k}}r(s)+\mu_{i\cdot}(s)\right)\! V_i^{(\bm{k})}(s,t) - \sum_{\ell=1}^n k_\ell \, b^\ell_i(s) V_i^{(\bm{k}-\bm{e}_\ell)}(s,t) \\ 
&\quad -\sum_{j:j\neq i}\mu_{ij}(s)\!\sum_{\bm{y}\in \widetilde{S}(\bm{k})}\,\prod_{\ell=1}^n \binom{k_\ell}{y_\ell}(b^\ell_{ij}(s))^{y_\ell}V_j^{(\bm{k}-\bm{y})}(s,t),
\end{split}  
\end{align}
with the terminal conditions $V_i^{(\bm{k})}(t,t) = \mathds{1}_{(\bm{k}=\bm{0})}$. 
\end{theorem}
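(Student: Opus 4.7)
\textbf{Proof plan for Theorem \ref{thm:diff_Vi}.} The strategy is the one indicated in the paragraph preceding the statement: start from the matrix-valued differential equation for $\bm{V}^{(\bm{k})}(s,t)$ in Theorem \ref{thm:diff_V}, right-multiply both sides by $\bm{1}=(1,\ldots,1)'$, and read off the $i$'th component, using that $V_i^{(\bm{k})}(s,t)=\{\bm{V}^{(\bm{k})}(s,t)\bm{1}\}_i$ by \eqref{eq:Vi_expr}. So my first step is just to apply $\bm{1}$ on the right to each of the three terms on the right-hand side of \eqref{eq:diff_Vij} separately.

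The only non-mechanical bit is re-grouping the resulting expressions so that the off-diagonal transition contributions are collected into the single $\widetilde{S}(\bm{k})$-sum appearing in the statement. I will split the pieces as follows. First, $(\bar{\bm{k}}\,r(s)\bm{I}-\bm{M}(s))\bm{V}^{(\bm{k})}(s,t)\bm{1}$ at row $i$ gives $\bar{\bm{k}}\,r(s)V_i^{(\bm{k})}(s,t)-\sum_j \mu_{ij}(s)V_j^{(\bm{k})}(s,t)$; using $\mu_{ii}(s)=-\mu_{i\cdot}(s)$ this becomes $(\bar{\bm{k}}\,r(s)+\mu_{i\cdot}(s))V_i^{(\bm{k})}(s,t)-\sum_{j:j\neq i}\mu_{ij}(s)V_j^{(\bm{k})}(s,t)$, and the last term is precisely the $\bm{y}=\bm{0}$ contribution in the $\widetilde{S}(\bm{k})$ sum. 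Next, for each $\ell$, $\bm{R}_\ell(s)\bm{V}^{(\bm{k}-\bm{e}_\ell)}(s,t)\bm{1}$ at row $i$ decomposes, by the definition $\bm{R}_\ell=\bm{M}\bullet \bm{B}_\ell+\bm{\Delta}(\bm{b}^\ell)$, into the diagonal piece $b_i^\ell(s)V_i^{(\bm{k}-\bm{e}_\ell)}(s,t)$ and the off-diagonal piece $\sum_{j:j\neq i}\mu_{ij}(s)b^\ell_{ij}(s)V_j^{(\bm{k}-\bm{e}_\ell)}(s,t)$ (using that $b^\ell_{ii}$ never enters, since we only ever pair $\bm{M}\bullet\bm{B}_\ell$ with off-diagonal transitions). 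Multiplied by $k_\ell$ and summed over $\ell$, the off-diagonal pieces are exactly the $\bm{y}=\bm{e}_\ell$ terms of the $\widetilde{S}(\bm{k})$ sum, since $\prod_r \binom{k_r}{y_r}(b^r_{ij})^{y_r}=k_\ell b^\ell_{ij}$ when $\bm{y}=\bm{e}_\ell$. Finally, for $\bm{y}\in S(\bm{k})$ with $\bm{y}\notin E_n$, $\bm{C}^{(\bm{y})}(s)\bm{V}^{(\bm{k}-\bm{y})}(s,t)\bm{1}$ at row $i$ equals $\sum_{j:j\neq i}\mu_{ij}(s)\prod_\ell b^\ell_{ij}(s)^{y_\ell}V_j^{(\bm{k}-\bm{y})}(s,t)$ by the definition of $\bm{C}^{(\bm{y})}$ in \eqref{eq:matrix_last}, which supplies the remaining terms of the $\widetilde{S}(\bm{k})$ sum.

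Putting these three contributions back together with the appropriate signs reproduces the right-hand side of \eqref{eq:diff_Vi}. The terminal condition is immediate from $V_i^{(\bm{k})}(t,t)=\sum_{j\in\J}V_{ij}^{(\bm{k})}(t,t)$ together with $\bm{V}^{(\bm{k})}(t,t)=\mathds{1}_{(\bm{k}=\bm{0})}\bm{I}$ from Theorem \ref{thm:diff_V}, whence $V_i^{(\bm{k})}(t,t)=\mathds{1}_{(\bm{k}=\bm{0})}$.

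I do not expect any real obstacle here: once the matrix equation has been established, the result is essentially a bookkeeping exercise. The one thing to keep slightly careful about is the reassembly of the off-diagonal $\mu_{ij}$ contributions from the three different sources (the $-\bm{M}$ term, the $\bm{M}\bullet \bm{B}_\ell$ pieces of $\bm{R}_\ell$, and the $\bm{C}^{(\bm{y})}$ terms) into a single sum indexed by $\bm{y}\in\widetilde{S}(\bm{k})$; the factorization $\prod_\ell \binom{k_\ell}{y_\ell}(b^\ell_{ij})^{y_\ell}$ matches each source against the correct value of $\bm{y}$, so the combination works out cleanly.
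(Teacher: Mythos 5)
Your proposal is correct and is exactly the paper's argument: the paper's proof consists of the single instruction to right-multiply the equation of Theorem \ref{thm:diff_V} by $\bm{1}$ and read off the $i$'th component, and your elaboration of the bookkeeping (the $\bm{y}=\bm{0}$ term coming from $-\bm{M}$, the $\bm{y}=\bm{e}_\ell$ terms from the off-diagonal part of $\bm{R}_\ell$, the rest from $\bm{C}^{(\bm{y})}$, using the zero diagonal of $\bm{B}_\ell$) is an accurate filling-in of the details the paper leaves implicit.
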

\begin{proof}
Simply multiply the differential equation in Theorem \ref{thm:diff_V} with $\bm{1} = (1,1,\ldots,1)'$ on both sides and extract the $i$'th element of the vectors.
\end{proof}
\begin{remark}\label{remark:central_moment}
The differential equation of Theorem  \ref{thm:diff_Vi} generalizes \cite[(3.2)]{NorbergHigherOrder} to multivariate present values; for $n=1$ we have, supressing the superfluous dependency in $\ell$, that $S(k) = \tuborg{1,2,\ldots,k}$ and $E_1=\tuborg{1}$, and so the differential equations reads
\begin{align*}
\frac{\partial }{\partial s} V_i^{(k)}(s,t) &= \left(k r(s)+\mu_{i\cdot}(s)\right)\! V_i^{(k)}(s,t) - k  b_i(s) V_i^{(k-1)}(s,t) \\
&\quad -\sum_{j:j\neq i}\mu_{ij}(s)\sum_{y=0}^k \binom{k}{y}(b_{ij}(s))^{y}V_j^{(k-y)}(s,t),  
\end{align*}
with the terminal conditions $V_i^{(k)}(t,t) = \mathds{1}_{(k=0)}$. \demoo 
\end{remark}    
\begin{remark}
The conditional central moments
\begin{align*}
m_i^{(\bm{k})}(s,t) = \E\!\left[\left. \, \prod_{\ell = 1}^n \left(U_\ell(s,t) - V_i^{(\bm{e}_\ell)}(s,t)\right)^{k_\ell}\, \right| \, Z(s) = i\right]
\end{align*}
may be obtained from the conditional (non-central) moments $V_i^{(\bm{k})}(s,t)$ via an application of the multidimensional binomial formula, giving 
\begin{align}\label{eq:centralmoment}
m_i^{(\bm{k})}(s,t) = \sum_{\bm{y}\in \widetilde{S}(\bm{k})} \, \prod_{\ell = 1}^n (-1)^{k_\ell - y_\ell}\binom{k_\ell}{y_\ell}V_i^{(\bm{y})}(s,t)V_i^{(\bm{e}_\ell)}(s,t)^{k_\ell - y_\ell}.
\end{align} 
Thus, by first solving the differential equation of Theorem \ref{thm:diff_Vi} we are able to compute $m_i^{(\bm{k})}(s,t)$ via \eqref{eq:centralmoment}.\ Here, we may note that solving the differential equation of Theorem \ref{thm:diff_Vi} to compute the $\bm{k}$'th moment immediately gives $V_i^{(\bm{y})}(s,t)$ for all $\bm{y}\in \widetilde{S}(\bm{k})$; details regarding this is explained in Subsection \ref{subsec:prodint_moments}. 
\demoo 
\end{remark}
\begin{example}[Conditional covariance]\label{ex:prod_moment}
For $n=2$ and $\bm{k} = (1,1)$ we have that $\widetilde{S}(\bm{k}) = \tuborg{(0,0),(1,0), (0,1) , (1,1)}$, and so the differential equation for the conditional product moment $V_i^{(1,1)}(s,t)=\E\!\left[\left. U_1(s,t)U_2(s,t)\, \right| Z(s) = i\right]$ is given by
\begin{align*}
\frac{\partial}{\partial s} V_i^{(1,1)}(s,t) &= \left(2r(s)+\mu_{i\cdot}(s)\right)\!V_i^{(1,1)}(s,t) - b_i^1(s)V_i^{(0,1)}(s,t) - b_i^2(s)V_i^{(1,0 )}(s,t) \\
&\quad - \sum_{j:j\neq i}\mu_{ij}(s)\bigg(V_j^{(1,1)}(s,t)+b_{ij}^1(s)V_j^{( 0,1)}(s,t) + b_{ij}^2(s)V_j^{( 1,0)}(s,t) \\
&\qquad\qquad\qquad\qquad 
+b_{ij}^1(s) b_{ij}^2(s)
\bigg),  \\
V_i^{(1,1)}(t,t) &= 0,
\end{align*}
where $V_i^{(0,1)}$ and $V_i^{(1,0)}$ are the state-wise prospective reserves of the two payment processes, which can be calculated using Thiele's differential equation. From this, the conditional central moment 
$m_i^{(1,1)}(s,t)$ is given as
\begin{align}\nonumber
m_i^{(1,1)}(s,t) &= \E\!\left[\left. \left(U_1(s,t) - V_i^{(1,0)}(s,t)\right)\!\left(U_2(s,t) - V_i^{(0,1)}(s,t)\right) \, \right| \, Z(s) = i\right] \\ \label{eq:covariance}
&=V_i^{(1,1)}(s,t)-V_i^{(1,0)}(s,t)V_i^{(0,1)}(s,t),
\end{align}
which is the conditional covariance of $U_1(s,t)$ og $U_2(s,t)$ given $Z(s) = i$.\demo
\end{example}
Following up on the example, it turns out that it is possible to derive differential equations for the conditional covariance between two present values
in which the sum at risk used in the Thiele differential equations appears. The result is also presented without proof in \cite[Proposition 10.3]{asmussensteffensen}, and we are able to proof the result here through our results on the non-central moments.    

\begin{corollary}\label{cor:covariance}
The conditional covariance $m_i^{(\bm{e}_\ell+\bm{e}_m)}(s,t)$ between $U_\ell(s,t)$ and $U_m(s,t)$ given $Z(s) = i$ satisfies the system of backward differential equations given by, for $i\in \J$, 
\begin{align*}
\frac{\partial}{\partial s}m_i^{(\bm{e}_\ell+\bm{e}_m)}(s,t) &= 2r(s)m_i^{(\bm{e}_\ell+\bm{e}_m)}(s,t) \\
&\quad - \sum_{j:j\neq i} \mu_{ij}(s)\!\left(R_{ij}^\ell(s)R_{ij}^m(s)+m_j^{(\bm{e}_\ell+\bm{e}_m)}(s,t)-m_i^{(\bm{e}_\ell+\bm{e}_m)}(s,t)\right)\!,
\end{align*}
with terminal conditions $m_i^{(\bm{e}_\ell+\bm{e}_m)}(t,t) = 0$.\ Here, $R_{ij}^\ell$ denotes the sum at risk in the $\ell$'th payment process for the transition from state $i$ to $j$, $i\neq j$, given by
\begin{align*}
R_{ij}^\ell(s) = b_{ij}^\ell(s) + V_j^{(\bm{e}_\ell)}(s,t) - V_i^{(\bm{e}_\ell)}(s,t), 
\end{align*}
and analogously for $R_{ij}^m$. 
\end{corollary}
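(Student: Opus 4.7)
My plan is to reduce the claim to Theorem \ref{thm:diff_Vi} (and the classical Thiele equation, which is the special case $\bm{k}=\bm{e}_\ell$) via the central-moment representation \eqref{eq:centralmoment}. First, I would specialize \eqref{eq:centralmoment} to $\bm{k}=\bm{e}_\ell+\bm{e}_m$. In both sub-cases $\ell\neq m$ and $\ell=m$, the sum over $\widetilde{S}(\bm{e}_\ell+\bm{e}_m)$ collapses (exactly as in Example \ref{ex:prod_moment}, see \eqref{eq:covariance}) to the clean identity
\begin{align*}
m_i^{(\bm{e}_\ell+\bm{e}_m)}(s,t)=V_i^{(\bm{e}_\ell+\bm{e}_m)}(s,t)-V_i^{(\bm{e}_\ell)}(s,t)\,V_i^{(\bm{e}_m)}(s,t).
\end{align*}
This is the workhorse identity, since it also gives the terminal condition $m_i^{(\bm{e}_\ell+\bm{e}_m)}(t,t)=0$ from $V_i^{(\bm{k})}(t,t)=\mathds{1}_{(\bm{k}=\bm{0})}$.

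Next, I would differentiate this identity in $s$, applying the product rule and substituting \eqref{eq:diff_Vi} for each of $\partial_s V_i^{(\bm{e}_\ell+\bm{e}_m)}$, $\partial_s V_i^{(\bm{e}_\ell)}$ and $\partial_s V_i^{(\bm{e}_m)}$. For $\bm{k}=\bm{e}_\ell+\bm{e}_m$ we have $\bar{\bm{k}}=2$ and $\widetilde{S}(\bm{k})=\{\bm 0,\bm{e}_\ell,\bm{e}_m,\bm{e}_\ell+\bm{e}_m\}$, so the sojourn-payment sum produces only the two terms $b_i^\ell V_i^{(\bm{e}_m)}+b_i^m V_i^{(\bm{e}_\ell)}$, while the transition sum over $\widetilde{S}(\bm{k})$ contributes $V_j^{(\bm{e}_\ell+\bm{e}_m)}+b_{ij}^\ell V_j^{(\bm{e}_m)}+b_{ij}^m V_j^{(\bm{e}_\ell)}+b_{ij}^\ell b_{ij}^m$. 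The Thiele parts for $V_i^{(\bm{e}_\ell)}$ and $V_i^{(\bm{e}_m)}$ feed in analogous but simpler terms after multiplication by $V_i^{(\bm{e}_m)}$ and $V_i^{(\bm{e}_\ell)}$ respectively. A clean observation at this stage is that the two sojourn contributions $b_i^\ell V_i^{(\bm{e}_m)}$ and $b_i^m V_i^{(\bm{e}_\ell)}$ coming from $\partial_s V_i^{(\bm{e}_\ell+\bm{e}_m)}$ cancel exactly against those produced by the product rule on $V_i^{(\bm{e}_\ell)}V_i^{(\bm{e}_m)}$, so the deterministic interest-rate payment functions $b_i^\ell,b_i^m$ disappear entirely. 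Likewise, the coefficient of $m_i^{(\bm{e}_\ell+\bm{e}_m)}$ emerges from the regrouping $(2r(s)+\mu_{i\cdot}(s))V_i^{(\bm{e}_\ell+\bm{e}_m)}-2(r(s)+\mu_{i\cdot}(s))V_i^{(\bm{e}_\ell)}V_i^{(\bm{e}_m)}=2r(s)\,m_i^{(\bm{e}_\ell+\bm{e}_m)}+\mu_{i\cdot}(s)\bigl(V_i^{(\bm{e}_\ell+\bm{e}_m)}-2V_i^{(\bm{e}_\ell)}V_i^{(\bm{e}_m)}\bigr)$.

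The remaining work is the purely algebraic identification of the transition sum as $-\sum_{j:j\neq i}\mu_{ij}(s)\bigl(R_{ij}^\ell R_{ij}^m+m_j^{(\bm{e}_\ell+\bm{e}_m)}-m_i^{(\bm{e}_\ell+\bm{e}_m)}\bigr)$. Here I would expand
\begin{align*}
R_{ij}^\ell R_{ij}^m=(b_{ij}^\ell+V_j^{(\bm{e}_\ell)}-V_i^{(\bm{e}_\ell)})(b_{ij}^m+V_j^{(\bm{e}_m)}-V_i^{(\bm{e}_m)})
\end{align*}
into nine terms and use the identity from the first paragraph to replace $V_j^{(\bm{e}_\ell+\bm{e}_m)}-V_j^{(\bm{e}_\ell)}V_j^{(\bm{e}_m)}$ by $m_j^{(\bm{e}_\ell+\bm{e}_m)}$ (and similarly for $i$); the resulting cross-terms match one-for-one against those produced by the ODE substitution. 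This term-matching is the main obstacle: nothing is deep, but the bookkeeping of quadratic and mixed terms in $\{b_{ij}^\ell,b_{ij}^m,V_i^{(\bm{e}_\ell)},V_j^{(\bm{e}_\ell)},V_i^{(\bm{e}_m)},V_j^{(\bm{e}_m)}\}$ is easy to get wrong. Once verified, collecting everything yields the stated backward differential equation, with the terminal condition already noted. The case $\ell=m$ is formally identical (it reduces to the classical multi-state Hattendorff variance formula of \cite{Ramlau1988}), so no separate argument is required.
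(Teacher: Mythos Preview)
Your proposal is correct and follows essentially the same route as the paper: start from the identity $m_i^{(\bm{e}_\ell+\bm{e}_m)}=V_i^{(\bm{e}_\ell+\bm{e}_m)}-V_i^{(\bm{e}_\ell)}V_i^{(\bm{e}_m)}$ (which is \eqref{eq:covariance}), differentiate in $s$, substitute the ODE of Example~\ref{ex:prod_moment} for $\partial_s V_i^{(\bm{e}_\ell+\bm{e}_m)}$ and Thiele's equation for $\partial_s V_i^{(\bm{e}_\ell)}$, $\partial_s V_i^{(\bm{e}_m)}$, and regroup. The paper's proof is terse (``By gathering the relevant terms, we obtain the desired differential equation''), so your explicit bookkeeping of the sojourn cancellations and the nine-term expansion of $R_{ij}^\ell R_{ij}^m$ is exactly the detail it omits.
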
 
\begin{proof}
Differentiating \eqref{eq:covariance} w.r.t.\ $s$ on both sides gives
\begin{align}\nonumber
\frac{\partial}{\partial s}m_i^{(\bm{e}_\ell+\bm{e}_m)}(s,t) &= \frac{\partial}{\partial s} V_i^{(\bm{e}_\ell + \bm{e}_m)}(s,t) -  V_i^{(\bm{e}_m)}(s,t)\frac{\partial}{\partial s} V_i^{(\bm{e}_\ell)}(s,t) \\  \label{eq:diff_covariance}
&\quad  -V_i^{(\bm{e}_\ell)}(s,t)\frac{\partial}{\partial s} V_i^{(\bm{e}_m)}(s,t)
\end{align}
Then insert the expression for $\frac{\partial}{\partial s} V_i^{(\bm{e}_\ell + \bm{e}_m)}(s,t)$ obtained in Example \ref{ex:prod_moment} as well as the Thiele differential equations for the state-wise prospective reserves, given by
\begin{align*}
\frac{\partial }{\partial s}V_i^{(\bm{e}_\ell)}(s,t) = r(s)V_i^{(\bm{e}_\ell)}(s,t) - b^\ell_i(s)-\sum_{j:j\neq i}\mu_{ij}(s)R_{ij}^\ell(s),
\end{align*}  
and analogously for $V_i^{(\bm{e}_m)}$.\ By gathering the relevant terms, we obtain the desired differential equation. The boundary condition follows directly from \eqref{eq:covariance}.  
\end{proof}
\begin{remark}
Note that the differential equation of Corollary \ref{cor:covariance} corresponds to a Thiele differential equation with interest rate $2r$, no sojourn payments and transition payments $R_{ij}^\ell R_{ij}^m$ from state $i$ to state $j$.\ Consequently, the conditional covariance $m_i^{(\bm{e}_\ell+\bm{e}_m)}$ has the representation
\begin{align*}
m_i^{(\bm{e}_\ell+\bm{e}_m)}(s,t) &= \E\!\left[\left. \int_s^t e^{-\int_s^u 2r(v)\dd v}\dd \widetilde{B}(u)\, \right| Z(s) = i  \right] \\
\dd \widetilde{B}(u)&= \sum_{i \in \J}\mathds{1}_{(Z(u-)=i)}\!\sum_{j:j\neq i} R_{ij}^\ell (u) R_{ij}^m (u)\dd N_{ij}(u). 
\end{align*}  
In the case of the conditional variance, i.e.\ $\ell = m$, these type of representations are known as Hattendorff's theorem, which has been formulated in a multi-state Markovian framework by \cite{Ramlau1988} using martingale techniques and further generalized in \cite{norberg1992}.\ The result here may thus reveal that the probabilistic structures leading to Hattendorff's theorem is not only limited to the variance, but can be carried over to the covariance. \demoo
\end{remark}
From the conditional covariances (and variances when $\ell=m$), we are then able to put up the conditional covariance matrix of $\bm{U}(s,t)$ given $Z(s) = i$, 
\begin{align}\label{eq:covariance_matrix}
\bm{\Sigma}_i(s,t) =  \left\{m_i^{(\bm{e}_\ell+\bm{e}_m)}\right\}_{\ell,m=1,\ldots,n}  
\end{align}   
as well as the corresponding correlation matrix
\begin{align}\label{eq:correlation_matrix}
\bm{\rho}_i(s,t) = \left\{\frac{m_i^{(\bm{e}_\ell+\bm{e}_m)}(s,t)}{\sqrt{m_i^{(\bm{e}_\ell+\bm{e}_\ell)}(s,t)m_i^{(\bm{e}_m+\bm{e}_m)}(s,t)}}\right\}_{\ell,m=1,\ldots,n}
\end{align}
which makes us able to analyse pair-wise dependence structures between the present values.  In particular, these may be used to approximate joint safety margins via multivariate CLT approximations as outlined in Section \ref{sec:motivation}.

\subsection{Product integral representation of conditional partial moments}\label{subsec:prodint_moments}
In this subsection, we derive a product integral representation for the conditional partial moments. This allows for the partial moments to be treated both theoretically and numerically as  objects of its own within the theory of product integrals, where all properties and results for these can be benefited from, see e.g.\  \cite{Johansen, GillJohansen} for theoretical properties and \cite{HeltonStucwisch} for numerical schemes. Specific to this problem, the product integral shall even demonstrate how one may compute all moments $\bm{V}^{(\bm{y})}$, $\bm{y}\in \widetilde{S}(\bm{k})$, at once.

From the differential equations for the conditional (partial) moments, we see a structure similar to those of a single payment stream, namely that one must use all lower order moments $\bm{y}^1,\ldots,\bm{y}^{|\bm{k}|-1}\in S(\bm{k})$ (each computed using Theorem \ref{thm:diff_V} or \ref{thm:diff_Vi}) to compute the $\bm{k}$'th moment. Since each of these need their corresponding lower order moments $S(\bm{y}^m)$, $m=1,\ldots,|\bm{k}|-1$, we are able to compute the $\bm{k}$'th moment starting from the $\bm{0}$'th moment, which is $\bm{V}^{(\bm{0})}(s,t) = \bm{P}(s,t)$ (or $V_i^{(\bm{0})}(s,t) = 1$), and then iteratively for each $m=1,\ldots,|\bm{k}|$ compute the $\bm{y}^m$'th moment using the lower order moments $\bm{y}^1,\ldots,\bm{y}^{m-1}$.\ In total, one must then solve $(|\bm{k}|+1)$ $J\times J$-dimensional systems of differential equations for $\bm{V}^{(\bm{k})}$, and $(|\bm{k}|+1)$ $J$-dimensional systems for $V_i^{(\bm{k})}$.

However, this approach relies on how we sort the vectors $\bm{y}^1,\ldots,\bm{y}^{|\bm{k}|}$, or, equivalently, the order at which we solve the differential equations, since we must ensure that the sets $S(\bm{y}^m)$ are increasing, i.e.\ that $S(\bm{y}^{m-1})\subseteq S(\bm{y}^{m})$ for all $m=1,\ldots,|\bm{k}|$, such that we actually are able to draw upon all lower order moments when calculating a given moment $\bm{y}^m$.\ Note that this is a trivial matter in the case of a single payment process, since for $n=1$ we have $S(k) = \tuborg{1,\ldots,k}$. 

It turns out that the key to obtain the result is to order the vectors $\bm{y}^1,\ldots,\bm{y}^{|\bm{k}|}$ in a way that fortunately is standard in most software packages.

\begin{assumption}
We assume that the lower order moments $S(\bm{k}) = \tuborg{\bm{y}^1,\ldots,\bm{y}^{|\bm{k}|}}$ are \underline{lexicographical ordered}, that is, they satisfy that for all $m' > m$, $m,m'\in \tuborg{1,\ldots,|\bm{k}|}$, there exist $u\in\tuborg{1,\ldots,n}$ (dependent on $m$ and $m'$), such that 
\begin{align*}
y_\ell ^{m} &= y_\ell ^{m'} \ \text{for all} \ \ell<u,  \ \text{and}  \\
y_u ^{m} &< y_u ^{m'}.
\end{align*}
\end{assumption}  
In other words, $\bm{y}^{m'}$ is strictly larger than $\bm{y}^{m}$ in the first entry where they differ (it may be smaller in the remaining entries). Details regarding this type of ordering and its properties are presented in Appendix \ref{sec:lexi}.\ 

Now, define the matrix $\bm{F}_{\bm{U}}^{(\bm{k})}(x)$, $x\geq 0$, as
\begin{align*}
  \begin{pmatrix}
  \bm{M}(x)-\bar{\bm{k}}r(x)\bm{I} & \bm{f}_1^{(\bm{k})}(x) & \bm{f}_2^{(\bm{k})}(x)  & \hdots &  \bm{f}_{|\bm{k}|}^{(\bm{k})}(x) \\[0.2 cm]
\bm{0} & \bm{M}(x)-\bar{\bm{y}}^{|\bm{k}|-1}r(x)\bm{I} & \bm{f}_1^{(\bm{y}^{|\bm{k}|-1})}(x) & \hdots & \bm{f}_{|\bm{k}|-1}^{(\bm{y}^{|\bm{k}|-1})}(x)  \\[0.2 cm]
\vdots & \vdots & \ddots & \vdots & \vdots   \\
\bm{0}   & \bm{0} & \bm{0} & \bm{M}(x)-\bar{\bm{y}}^{1}r(x)\bm{I} & \bm{f}_1^{(\bm{y}^1)}(x) 
\\[0.1 cm]
\bm{0} & \bm{0} &\bm{0} & \bm{0} & \bm{M}(x) 
\end{pmatrix}
\end{align*}
with $\bm{f}^{(\bm{y}^m)}_u(x)$, for $m=1,\ldots,|\bm{k}|$ and  $u=1,\ldots,m$, given by, setting $\bm{y}^{0} := \bm{0}$,
\begin{align}\label{eq:f}
\bm{f}_u^{(\bm{y}^m)}(x) = \begin{cases}
y^m_1 \bm{R}_1(x) \quad &\text{if} \ \bm{y}^{m} - \bm{y}^{m-u}=\bm{e}_1, \\
y^m_2 \bm{R}_2(x) \quad &\text{if} \ \bm{y}^{m} - \bm{y}^{m-u}=\bm{e}_2, \\
\vdots &\vdots \\
y^m_n \bm{R}_n(x) \quad &\text{if} \ \bm{y}^{m} - \bm{y}^{m-u}=\bm{e}_n, \\
\prod_{\ell = 1}^n \binom{y^m_\ell}{\xi_\ell} \bm{C}^{(\bm{\xi})}(x), &\text{if}  \ \bm{y}^{m} - \bm{y}^{m-u} = \bm{\xi} \in S({\bm{y}^m})\setminus{E_n}, \\
\bm{0} &\text{Otherwise}.
\end{cases} 
\end{align}
Here, one should think of the matrices $\tuborg{\bm{f}_u^{(\bm{y}^m)}(x)}_{u=1}^m$ as those needed to compute the $\bm{y}^m$'th moment, cf.\ Theorem \ref{thm:diff_V}, and so each block row of $\bm{F}_{\bm{U}}^{(\bm{k})}(x)$ corresponds to a calculation of a lower order moment. 
\begin{example}[Conditional covariance]\label{ex:prod_moment_prodint}
For the computation of conditional covariances, we have for $n=2$ and $\bm{k} = (1,1)$ that the matrix $\bm{F}_{\bm{U}}^{(1,1)}(x)$ reads as  
\begin{align*}
\bm{F}_{\bm{U}}^{(1,1)}(x) =  \begin{pmatrix}
\bm{M}(x)-2r(x)\bm{I} & \bm{R}_2(x) & \bm{R}_1(x) & \bm{C}^{(1,1)}(x) \\
\bm{0} & \bm{M}(x)-r(x)\bm{I} & \bm{0}  & \bm{R}_1(x) \\
\bm{0}  & \bm{0} & \bm{M}(x)-r(x)\bm{I}  & \bm{R}_2(x) \\
 \bm{0}  & \bm{0} & \bm{0} & \bm{M}(x)
\end{pmatrix},
\end{align*}
arising from the fact that the lexicographical ordering of $S((1,1))$ is given as $(0,1),(1,0)$, and $(1,1)$.  \demo
\end{example}
Now, let
\begin{equation}\label{eq:prodint_V}
 \bm{G}^{(\bm{k})}(s,t) = \prod_s^t \left(\bm{I}+\bm{F}_{\bm{U}}^{(\bm{k})}(x)\!\dd x\right)
 \end{equation} 
denote the product integral on the interval $(s,t]$ of the matrix function $\bm{F}^{(\bm{k})}_{\bm{U}}$. Since the matrix $\bm{F}_{\bm{U}}^{(\bm{k})}(x)$ is a $ \big(|\bm{k}|+1\big)\times \big(|\bm{k}|+1\big)$ block-partitioned matrix with blocks of sizes $J\times J$, we have that $\bm{G}^{(\bm{k})}(s,t)$ is of similar size, and we denote with $\bm{G}_{ij}^{(\bm{k})}(s,t)$ the $ij$'th block of  $\bm{G}^{(\bm{k})}(s,t)$ for $i,j\in\{1,\ldots |\bm{k}|+1\}$.\ The following result then demonstrates how all moments of up to order $\bm{k}$ are obtained through this single calculation of a product integral. 
\begin{theorem}\label{thm:matrix_V}
For each $i\in \tuborg{0,1,\ldots,|\bm{k}|}$ we have, for $m=0,1,\ldots,i$, 
\begin{equation}\label{eq:V_prodint1}
\bm{G}^{(\bm{k})}_{|\bm{k}|+1-i, |\bm{k}|+1-m}(s,t) = \mathds{1}_{\left(\bm{y}^i\geq \bm{y}^m\right) }\prod_{\ell = 1}^n \binom{y_\ell^i}{y_\ell^m}e^{-\bm{\bar{y}^m}\int_s^t r(v)\dd v}\bm{V}^{(\bm{y}^i-\bm{y}^m)}(s,t),
\end{equation}
which in particular gives the conditional $\bm{y}^i$'th moment for $m=0$:
\begin{equation}\label{eq:V_prodint2}
\bm{G}^{(\bm{k})}_{|\bm{k}|+1-i, |\bm{k}|+1}(s,t) = \bm{V}^{(\bm{y}^i)}(s,t).
\end{equation}
\end{theorem}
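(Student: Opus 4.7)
The plan is to verify that both sides of \eqref{eq:V_prodint1} satisfy the same linear matrix backward differential equation with the same terminal condition, and then invoke uniqueness. By definition of the product integral, $\bm{G}^{(\bm{k})}(s,t)$ is the unique solution of
\begin{align*}
-\frac{\partial}{\partial s}\bm{G}^{(\bm{k})}(s,t) = \bm{F}^{(\bm{k})}_{\bm{U}}(s)\bm{G}^{(\bm{k})}(s,t), \quad \bm{G}^{(\bm{k})}(t,t) = \bm{I}.
\end{align*}
Since $\bm{F}^{(\bm{k})}_{\bm{U}}$ is block upper-triangular, so is $\bm{G}^{(\bm{k})}$. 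I therefore define a candidate $\widetilde{\bm{G}}^{(\bm{k})}(s,t)$ by letting its $(|\bm{k}|+1-i,|\bm{k}|+1-m)$-block equal the right-hand side of \eqref{eq:V_prodint1} for $0\leq m\leq i\leq|\bm{k}|$ and be zero in the strictly lower block-triangular part. The terminal condition is immediate: for $i=m$ the RHS reduces to $\bm{V}^{(\bm{0})}(t,t)=\bm{I}$; for $i>m$ with $\bm{y}^i\geq\bm{y}^m$ the factor $\bm{V}^{(\bm{y}^i-\bm{y}^m)}(t,t)=\bm{0}$ by Theorem \ref{thm:diff_V}; and otherwise the indicator $\mathds{1}_{(\bm{y}^i\geq\bm{y}^m)}$ kills the expression.

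For the ODE, the block form of the Kolmogorov backward equation gives
\begin{align*}
-\frac{\partial}{\partial s}\bm{G}^{(\bm{k})}_{|\bm{k}|+1-i,|\bm{k}|+1-m}(s,t) = \bigl(\bm{M}(s)-\bar{\bm{y}}^i r(s)\bm{I}\bigr)\bm{G}^{(\bm{k})}_{|\bm{k}|+1-i,|\bm{k}|+1-m}(s,t) + \sum_{u=1}^{i-m}\bm{f}_u^{(\bm{y}^i)}(s)\bm{G}^{(\bm{k})}_{|\bm{k}|+1-i+u,|\bm{k}|+1-m}(s,t),
\end{align*}
where the upper limit $i-m$ reflects the block-triangularity of $\bm{G}^{(\bm{k})}$. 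Differentiating the candidate RHS of \eqref{eq:V_prodint1} in $s$ and inserting the backward differential equation \eqref{eq:diff_Vij} for $\bm{V}^{(\bm{k}')}(s,t)$ with $\bm{k}'=\bm{y}^i-\bm{y}^m$, the derivative $\bar{\bm{y}}^m r(s)$ coming from the discount factor combines with $\bar{\bm{k}}'r(s)\bm{I}=(\bar{\bm{y}}^i-\bar{\bm{y}}^m)r(s)\bm{I}$ to produce exactly $(\bm{M}(s)-\bar{\bm{y}}^i r(s)\bm{I})\widetilde{\bm{G}}^{(\bm{k})}_{|\bm{k}|+1-i,|\bm{k}|+1-m}$. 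The remaining task is to identify the off-diagonal sum above with the $\bm{R}_\ell$- and $\bm{C}^{(\bm{y})}$-terms of \eqref{eq:diff_Vij}, multiplied by the prefactor $c_{i,m}\,e^{-\bar{\bm{y}}^m\int_s^t r(v)\dd v}$ with $c_{i,m}:=\prod_\ell \binom{y_\ell^i}{y_\ell^m}$.

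This matching is the main computational step and the crux of the proof. Parametrising $u$ by $\bm{\xi}=\bm{y}^i-\bm{y}^{i-u}$, the lexicographic ordering assumption guarantees that the map $u\mapsto\bm{\xi}$ is a bijection between those $u\in\{1,\ldots,i-m\}$ with $\bm{y}^{i-u}\leq\bm{y}^i$ and the set $\widetilde{S}(\bm{y}^i-\bm{y}^m)\setminus\{\bm{0}\}$, because $\bm{y}'\leq\bm{y}^i$ always implies $\bm{y}'$ precedes $\bm{y}^i$ lexicographically. The $\bm{\xi}=\bm{e}_\ell$ terms agree through the identity $y_\ell^i\binom{y_\ell^i-1}{y_\ell^m}=(y_\ell^i-y_\ell^m)\binom{y_\ell^i}{y_\ell^m}$, while for general $\bm{\xi}\in S(\bm{k}')\setminus E_n$ the coordinate-wise Vandermonde-type identity $\binom{a}{b}\binom{a-b}{c}=\binom{a}{c}\binom{a-c}{b}$ (both sides equal $\frac{a!}{b!c!(a-b-c)!}$) shows $\prod_\ell\binom{y_\ell^i}{\xi_\ell}\binom{y_\ell^i-\xi_\ell}{y_\ell^m}=\prod_\ell\binom{y_\ell^i}{y_\ell^m}\binom{y_\ell^i-y_\ell^m}{\xi_\ell}$, exactly what is needed. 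For the case $\bm{y}^m\not\leq\bm{y}^i$, any nonzero $\bm{f}_u^{(\bm{y}^i)}$ forces $\bm{y}^{i-u}\leq\bm{y}^i$ while any nonzero candidate block requires $\bm{y}^{i-u}\geq\bm{y}^m$; together these would force $\bm{y}^m\leq\bm{y}^i$, a contradiction, so the zero function trivially satisfies the homogeneous backward equation with zero terminal data. The principal obstacle is precisely this combinatorial and lexicographic bookkeeping; once the identities above are in place, matching the differential equations reduces to a term-by-term comparison.
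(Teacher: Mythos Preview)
Your approach is correct and genuinely different from the paper's. The paper proceeds by induction on $i$: it first converts \eqref{eq:diff_Vij} into an integral equation for $\bm{V}^{(\bm{k})}$, then invokes \cite[Lemma~1]{Bladt2020} (a Van Loan--type formula for product integrals of block upper-triangular matrices) to express $\bm{G}^{(\bm{k})}_{|\bm{k}|+1-i,|\bm{k}|+1-m}$ as an integral of products of already-identified blocks, and closes the induction using the same binomial identity $\binom{a}{c}\binom{a-c}{b}=\binom{a}{b}\binom{a-b}{c}$ together with Lemma~\ref{thm:prop_lexi}. Your route bypasses both the integral equation and the external lemma by checking directly that the candidate array satisfies the same linear backward ODE with the same terminal data, then appealing to uniqueness. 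This is more elementary and arguably cleaner; the paper's route, on the other hand, stays within the product-integral formalism it has set up and makes the connection to \cite{Bladt2020} explicit.

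One imprecision to tighten: your bijection statement is not quite right as written. The set $\{u\in\{1,\ldots,i-m\}:\bm{y}^{i-u}\leq\bm{y}^i\}$ can be strictly larger than $S(\bm{y}^i-\bm{y}^m)$; e.g.\ with $n=2$, $\bm{k}=(2,2)$, $\bm{y}^i=(2,2)$, $\bm{y}^m=(1,1)$, the former has four elements while the latter has three (the extra $u$ gives $\bm{\xi}=(0,2)\notin S((1,1))$). What actually holds is that the map $u\mapsto\bm{\xi}=\bm{y}^i-\bm{y}^{i-u}$ is a bijection between $\{u\in\{1,\ldots,i-m\}:\bm{y}^m\leq\bm{y}^{i-u}\leq\bm{y}^i\}$ and $S(\bm{y}^i-\bm{y}^m)$. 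This does not damage your argument, since the extra $u$'s contribute zero anyway (the indicator $\mathds{1}_{(\bm{y}^{i-u}\geq\bm{y}^m)}$ in the candidate block kills them), but you should say so explicitly. The justification of this corrected bijection is exactly the content of Lemma~\ref{thm:prop_lexi} in the paper, combined with the observation that componentwise $\leq$ implies lexicographic $\leq$ (which you already note).
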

\begin{proof}
We shall mimic the proof of \cite[Theorem 3-4]{Bladt2020} and modify suitably to the present setup. By multiplying \eqref{eq:diff_Vij} with $\prod_t^s \!\left(\bm{I}+\!\left[\bm{M}(u)-\bar{\bm{k}}r(u)\bm{I}\right]\!\!\dd u\right)$ on both sides we get
\begin{align*}
&\frac{\partial}{\partial s}\!\left(\prod_t^s \!\left(\bm{I}+\!\left[\bm{M}(u)-\bar{\bm{k}}r(u)\bm{I}\right]\!\!\dd u\right)\! \bm{V}^{(\bm{k})}(s,t)\!\right) \\
&\quad = - \sum_{\ell=1}^n k_\ell \prod_t^s \!\left(\bm{I}+\!\left[\bm{M}(u)-\bar{\bm{k}}r(u)\bm{I}\right]\!\!\dd u\right)\!\bm{R}_\ell(s)\bm{V}^{(\bm{k}-\bm{e}_\ell)}(s,t) \\
&\quad\quad - \sum_{\bm{y}\in S({\bm{k}}) \atop \bm{y}\notin  E_n}   \prod_{\ell = 1}^n {k_\ell \choose y_\ell}\prod_t^s \!\left(\bm{I}+\!\left[\bm{M}(u)-\bar{\bm{k}}r(u)\bm{I}\right]\!\!\dd u\right)\!\bm{C}^{(\bm{y})}(s)\bm{V}^{(\bm{k}-\bm{y})}(s,t). 
\end{align*}
Integrating the equation yields 
\begin{align}\nonumber
\bm{V}^{(\bm{k})}(s,t) &= \sum_{\ell = 1}^n k_\ell \int_s^t \prod_s^x \!\left(\bm{I}+\!\left[\bm{M}(u)-\bar{\bm{k}}r(u)\bm{I}\right]\!\!\dd u\right)\! \bm{R}_\ell (x) \bm{V}^{(\bm{k}-\bm{e}_\ell)}(x,t)\dd x \\ 
\nonumber
&\quad + \sum_{\bm{y}\in S({\bm{k}}) \atop \bm{y}\notin  E_n}  \prod_{\ell = 1}^n {k_\ell \choose y_\ell}\int_s^t \prod_s^x \!\left(\bm{I}+\!\left[\bm{M}(u)-\bar{\bm{k}}r(u)\bm{I}\right]\!\!\dd u\right)\!\bm{C}^{(\bm{y})}(x)\bm{V}^{(\bm{k}-\bm{y})}(x,t) \dd x. 
\end{align}
We now provide an induction argument to verify the identity \eqref{eq:V_prodint1} claimed in the theorem using this integral equation. For $i=0$ and $m=0$ the identity is trivially true. So assume that the identity is true for some $i-1$,  $i\in \tuborg{1,\ldots,|\bm{k}|+1}$, and corresponding $m=0,1\ldots,i-1$. Then \cite[Lemma 1]{Bladt2020} gives us that
\begin{align*}
\bm{G}^{(\bm{k})}_{|\bm{k}|+1-i,|\bm{k}|+1-i}(s,t) = \prod_s^t \big(\bm{I}+\!\left[\bm{M}(x)-\bar{\bm{y}}^ir(x)\bm{I}\right]\!\!\dd x\big) = e^{-\bar{\bm{y}}^i\int_s^t r(v)\dd v}\bm{P}(s,t),
\end{align*} 
which is \eqref{eq:prodint_V} for $m=i$.\ From the Lemma, it also follows that for $m=0,1,\ldots,i-1$, 
\begin{align*}
\bm{G}^{(\bm{k})}_{|\bm{k}|+1-i,|\bm{k}|+1-m}(s,t) = \sum_{j=1}^{i-m} \!\int_s^t \! e^{-\bar{\bm{y}}^i\!\int_s^x r(v)\dd v}\bm{P}(s,x)\bm{f}_j^{(\bm{y}^i)}(x)\bm{G}^{(\bm{k})}_{|\bm{k}|+1-(i-j),|\bm{k}|+1-m}(x,t)\!\dd x
\end{align*}
From the induction hypothesis we have that
\begin{align*}
\bm{G}^{(\bm{k})}_{|\bm{k}|+1-(i-j),|\bm{k}|+1-m}(x,t) = \mathds{1}_{(\bm{y}^{i-j}\geq \bm{y}^m )}\prod_{\ell = 1}^n \binom{y_\ell^{i-j}}{y_\ell^m}e^{-\bm{\bar{y}^m}\int_x^t r(v)\dd v}\bm{V}^{(\bm{y}^{i-j}-\bm{y}^m)}(x,t),
\end{align*}
and furthermore, $\bm{f}_j^{(\bm{y}^i)}(x)$ may be written as
\begin{align*}
\bm{f}_j^{(\bm{y}^i)}(x) = \sum_{\ell = 1}^n \mathds{1}_{(\bm{y}^i - \bm{y}^{i-j} = \bm{e}_\ell)} y^i_\ell \bm{R}_\ell(x) + \sum_{\bm{\xi}\in S(\bm{y}^i) \atop \bm{\xi}\notin E_n}\mathds{1}_{(\bm{y}^i - \bm{y}^{i-j} = \bm{\xi})} \prod_{\ell = 1}^n \binom{y^i_\ell}{\xi_\ell} \bm{C}^{(\bm{\xi})}(x).  
\end{align*}
Note that when $\bm{y}^i - \bm{y}^{i-j} = \bm{\xi}$ for some $\bm{\xi}\in S(\bm{y}^i)$ (including the unit vectors) and some $j=1,\ldots,i-m$, it holds 
\begin{align*}
\binom{y_\ell^{i-j}}{y_\ell^m}\binom{y^i_\ell}{\xi_\ell} = \binom{y_\ell^i}{y_\ell^m}\binom{y_\ell^i - y_\ell^m}{\xi_\ell}
\end{align*}
for all $\ell=1,\ldots,n$.\ Thus, we now have 
\begin{align*}
&\bm{G}^{(\bm{k})}_{|\bm{k}|+1-i,|\bm{k}|+1-m}(s,t) \\
&\quad = \prod_{\ell = 1}^n \binom{y_\ell^i}{y_\ell^m}  \!\int_s^t \! e^{-\bar{\bm{y}}^i\!\int_s^x r(v)\dd v}\bm{P}(s,x)e^{-\bm{\bar{y}}^m\int_x^t r(v)\dd v}  \sum_{j=1}^{i-m}\mathds{1}_{(\bm{y}^{i-j}\geq \bm{y}^m )}\times \\
&\qquad\qquad\qquad\qquad\Bigg\{\sum_{\ell = 1}^n \mathds{1}_{(\bm{y}^i - \bm{y}^{i-j} = \bm{e}_\ell)} (y^i_\ell-y_\ell^m) \bm{R}_\ell(x)\bm{V}^{(\bm{y}^i-\bm{y}^m-\bm{e}_\ell)}(x,t) \\
&\qquad\qquad\qquad\qquad + \!\!\sum_{\bm{\xi}\in S(\bm{y}^i) \atop \bm{\xi}\notin E_n}\mathds{1}_{(\bm{y}^i - \bm{y}^{i-j} = \bm{\xi})} \prod_{\ell = 1}^n \binom{y^i_\ell-y_\ell^m}{\xi_\ell} \bm{C}^{(\bm{\xi})}(x)\bm{V}^{(\bm{y}^i-\bm{y}^m-\bm{\xi})}(x,t)\Bigg\}\!\dd x 
\end{align*}
Then observe that the discount factors can be factored as, for $x\in [s,t]$,  
\begin{align*}
e^{-\bar{\bm{y}}^i\!\int_s^x r(v)\dd v}e^{-\bm{\bar{y}}^m\int_x^t r(v)\dd v} = e^{-\bm{\bar{y}^m}\int_s^t r(v)\dd v}e^{-(\bm{\bar{y}}^i-\bm{\bar{y}^m})\int_s^x r(v)\dd v}.
\end{align*}
Furthermore, when $\bm{y}^{i-j}\geq \bm{y}^m$ and $\bm{y}^i-\bm{y}^{i-j}=\bm{\xi}$ for some $\bm{\xi}\in S(\bm{y}^i)$ and $j=1,\ldots,i-m$, we have that $\bm{\xi}\leq \bm{y}^i-\bm{y}^m$, and so the last sum can be carried out over $S(\bm{y}^i-\bm{y}^m)\setminus E_n$.\ Thus, we now have 
\begin{align*}
&\bm{G}^{(\bm{k})}_{|\bm{k}|+1-i,|\bm{k}|+1-m}(s,t) \\
&\quad = \mathds{1}_{(\bm{y}^i\geq \bm{y}^m)}\prod_{\ell = 1}^n \binom{y_\ell^i}{y_\ell^m}e^{-\bar{\bm{y}}^m\!\int_s^t r(v)\dd v}\times \\ 
&\qquad\qquad\qquad \!\int_s^t \! e^{-(\bar{\bm{y}}^i-\bar{\bm{y}}^m)\!\int_s^x r(v)\dd v}\bm{P}(s,x)\Bigg\{ \sum_{\ell = 1}^n  (y^i_\ell-y_\ell^m) \bm{R}_\ell(x)\bm{V}^{(\bm{y}^i-\bm{y}^m-\bm{e}_\ell)}(x,t) \\
&\qquad\qquad\qquad\qquad + \!\!\sum_{\bm{\xi}\in S(\bm{y}^i-\bm{y}^m) \atop \bm{\xi}\notin E_n}\prod_{\ell = 1}^n \binom{y^i_\ell-y_\ell^m}{\xi_\ell} \bm{C}^{(\bm{\xi})}(x)\bm{V}^{(\bm{y}^i-\bm{y}^m-\bm{\xi})}(x,t)\Bigg\}\!\dd x  \\
&=\mathds{1}_{\left(\bm{y}^i\geq \bm{y}^m\right) }\prod_{\ell = 1}^n \binom{y_\ell^i}{y_\ell^m}e^{-\bm{\bar{y}^m}\int_s^t r(v)\dd v}\bm{V}^{(\bm{y}^i-\bm{y}^m)}(s,t),
\end{align*}
as claimed. In the first equality, we have used that for all $\bm{\xi} \in S(\bm{y}^i - \bm{y}^m)$, \begin{align*}
\sum_{j=1}^{i-m}\mathds{1}_{(\bm{y}^i - \bm{y}^{i-j} = \bm{\xi})} = \mathds{1}_{(\bm{y}^i\geq \bm{y}^m)},
\end{align*}
which follows from Lemma \ref{thm:prop_lexi}. 
\end{proof}
The theorem gives us that the right-block column of $\bm{G}^{(\bm{k})}(s,t)$ contains all moments of order up to $\bm{k}$, including the $\bm{0}$'th moment, as follows:
\begin{align*}
\prod_s^t \left(\bm{I}+\bm{F}_{\bm{U}}^{(\bm{k})}(x)\dd x\right) = \begin{pmatrix}
\ast & \ast & \ast & \ast &  \hdots & \ast & \bm{V}^{(\bm{k})}(s,t) \\
\ast & \ast & \ast & \ast &  \hdots & \ast & \bm{V}^{(\bm{y}^{|\bm{k}|-1})}(s,t) \\
\ast & \ast & \ast & \ast &  \hdots & \ast & \bm{V}^{(\bm{y}^{|\bm{k}|-2})}(s,t) \\
\vdots & \vdots & \vdots & \vdots &  & \vdots   & \vdots \\
\ast & \ast & \ast & \ast &  \hdots & \ast & \bm{V}^{(\bm{y}^{1})}(s,t)  \\
\ast & \ast & \ast & \ast &  \hdots & \ast & \bm{P}(s,t)
\end{pmatrix},
\end{align*}
and so all lower order up to order $\bm{k}$ are obtained through this product integral calculation. From this, the conditional moments $V_i^{(\bm{k})}$ and corresponding central moments $m_i^{(\bm{k})}$ are obtained via \eqref{eq:Vi} and \eqref{eq:centralmoment}, respectively. 

\section{Numerical example}\label{sec:numerical}
In this section, we illustrate the methods presented in the previous sections in a numerical example of the motivating example considered in Section 2.

We consider a 40-year old male today at time 0 with retirement age 65, such that $T = 25$.\ We suppose that the valuation basis is taken to be the technical basis in the numerical example of \cite{buchardt2015}, which is given by the following:  
\begin{align*}
r(s) &= 0.01, \\
\mu_{01}(s) &= \left(0.0004+10^{4.54+0.06(s+40)-10} \right)\!\mathds{1}_{(s\leq 25)}, \\
\mu_{10}(s) &= \left(2.0058\cdot e^{-0.117(s+40)}\right)\!\mathds{1}_{(s\leq 25)}, \\
\mu_{02}(s) &= 0.0005+10^{5.88+0.038(s+40)-10}, \\
\mu_{12}(s) &=  \mu_{02}(s)\!\left(1+\mathds{1}_{(s\leq 25)}\right). 
\end{align*}
Some of the transition intensities are inspired by the Danish G82M technical basis. We examine the pair-wise covariance and correlation structures of the product combinations in the active state, i.e.\ the entries of the conditional covariance and correlation matrix given in \eqref{eq:covariance_matrix}--\eqref{eq:correlation_matrix} with $i = 0$;\ this is illustrated in Figure \ref{fig:variance_covariance}--\ref{fig:correlations}. The calculations are based on a numerical solution of the differential equation of Corollary \ref{cor:covariance} for the conditional covariance.  
\begin{figure}[H]
\centering
\includegraphics[scale=0.55]{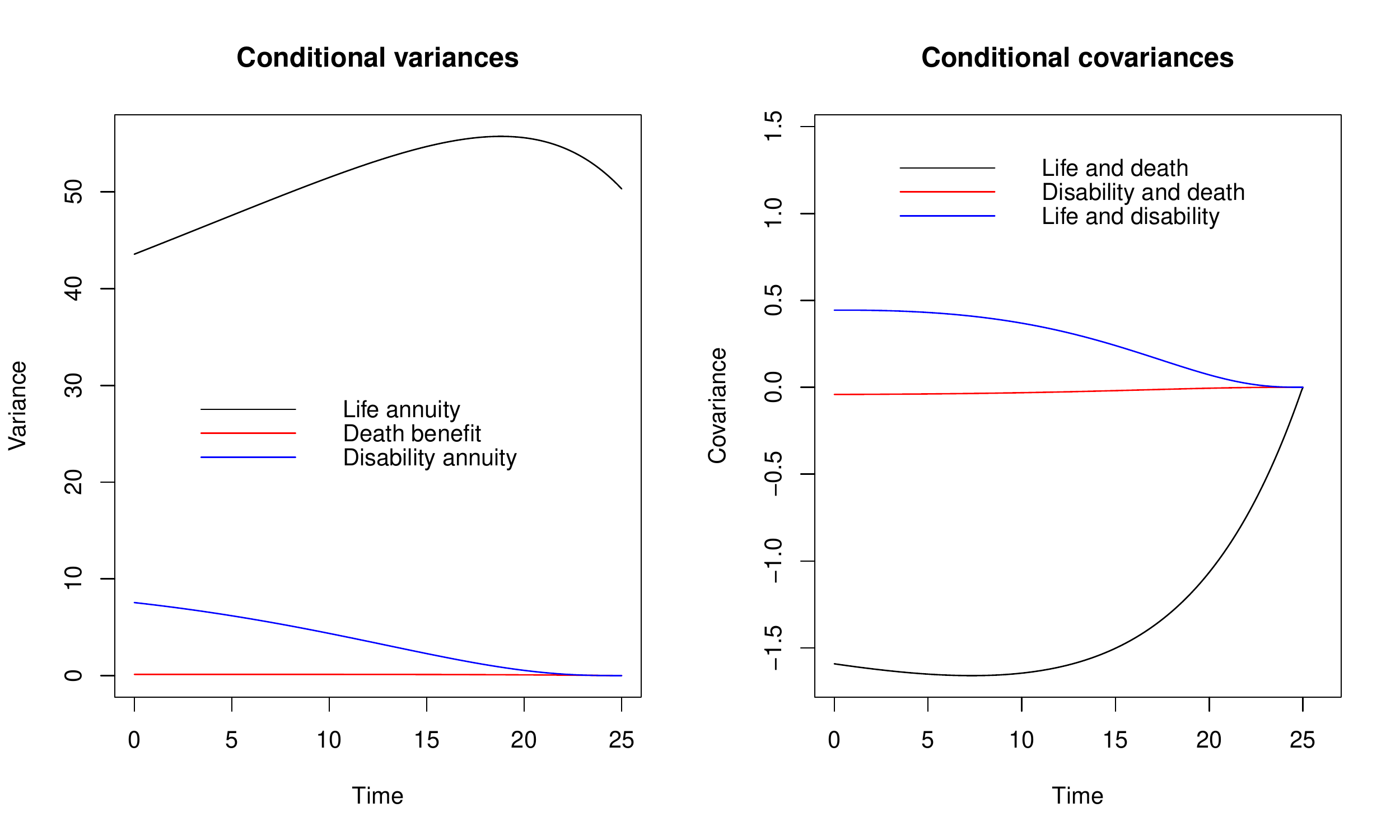}
\caption{Pair-wise conditional covariances $[0,25]\in s\mapsto \Sigma_0(s,70)_{\ell m}$, $\ell,m\in \tuborg{1,2,3}$, $\ell \leq m$, between the three products until retirement conditional on the insured being active.}
\label{fig:variance_covariance}
\end{figure}
\begin{figure}[H]
\centering
\includegraphics[scale=0.55]{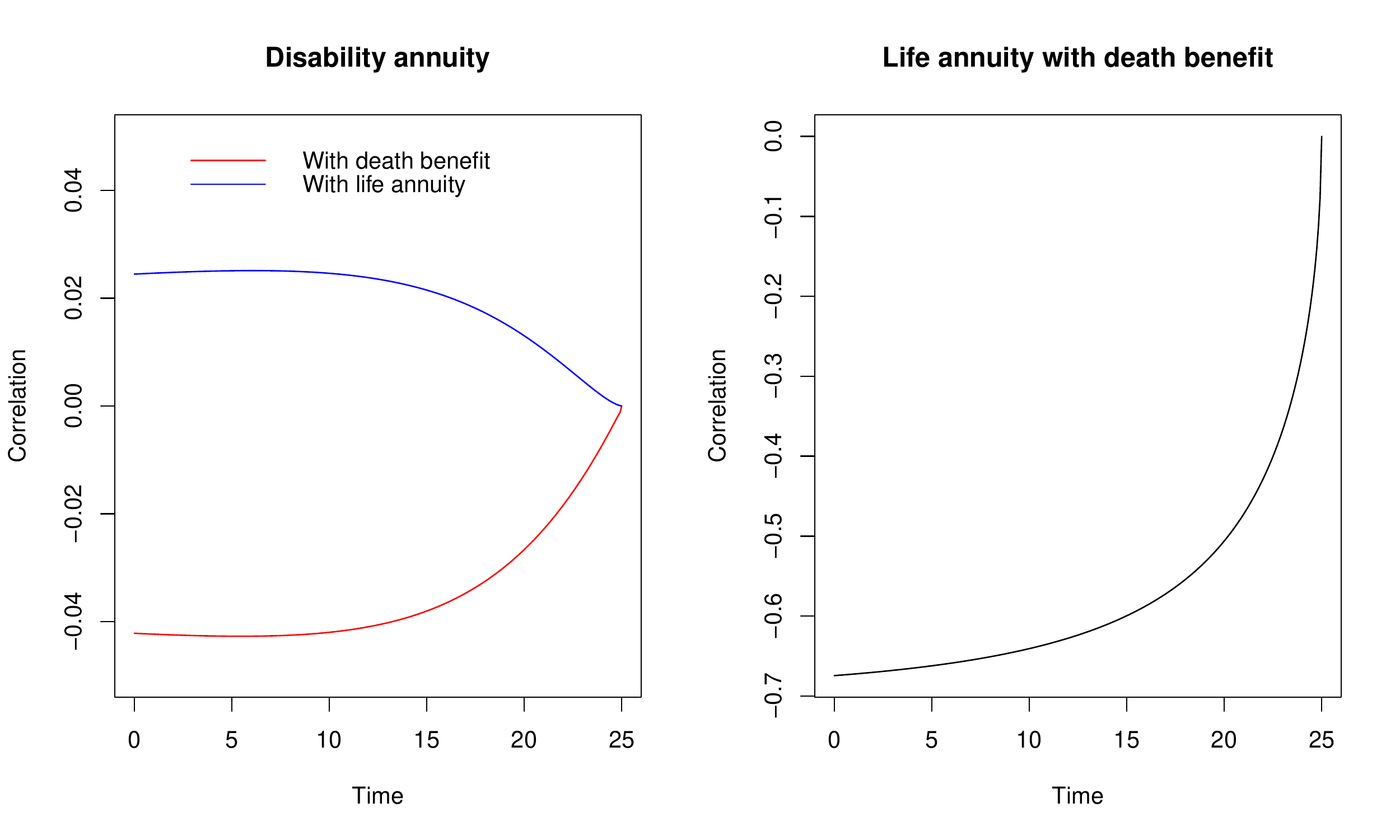}
\caption{Pair-wise conditional correlations $[0,25]\in s\mapsto \rho_0(s,70)_{\ell m}$, $\ell,m\in \tuborg{1,2,3}$, $\ell< m$, between the three products until retirement conditional on the insured being active.}
\label{fig:correlations}
\end{figure}

We see that the biggest dependence is between the life annuity and death benefit, which is was expected, but we also see a similar (but opposite) dependence structure between the disability annuity and death benefit respectively life annuity. The death benefit seem to be slightly more correlated with the disability annuity than the life annuity is, which is the most non-trivial observation encountered here. The calculated covariances can further be used to carry out the CLT approximation of the joint distribution outlined in Section \ref{sec:motivation}, which we refrain from doing here.

\section*{Acknowledgments}
I should like to thank my PhD supervisors Mogens Bladt and Mogens Steffensen for valuable suggestions and fruitful discussions during this work.

\bibliography{refsSmall}{}
\bibliographystyle{plain}

 \appendix

\section{Lexicographical ordering}\label{sec:lexi}
Let $\bm{k} = (k_1,\ldots,k_n)\in \N_0^n$ be a $n$-dimensional vector of natural numbers including zero. We think of this vector as representing some multivariate higher order moment we wish to calculate for the multivariate present value $\bm{U}$.\ Specific to this, we need to consider all combinations of lower order moments represented by the elements of $S(\bm{k})$; recall \eqref{eq:Sk} for its definition. Write these elements as
\begin{equation*}
S({\bm{k}}) =  \left\{\bm{y}^{1},\ldots,\bm{y}^{|\bm{k}|}
\right\}. 
\end{equation*}
The vectors $\bm{y}^{1},\ldots,\bm{y}^{|\bm{k}|}$ are then said to be \textit{lexicographical ordered} if for all $m' > m$, there exist $u$ (dependent on $m$ and $m'$) such that 
\begin{align*}
y_\ell ^{m} &= y_\ell ^{m'} \ \text{for all} \ \ell<u,  \ \text{and}  \\
y_u ^{m} &< y_u ^{m'}.
\end{align*}
In other words $y^{m'}$ must be strictly larger than $y^{m}$ in the first entry where they differ (it may be smaller in the remaining entries), and one commonly writes $\bm{y}^{m}<_{\text{lex}} \bm{y}^{m'}$.\ An illustration of the ordering is presented in Table \ref{tab:lexi_k}.  
\begin{table}[H]
\centering
\begin{tabular}{cccc}
$\bm{y}^1 = \left(0,\ldots,0,1\right)'$ & $\bm{y}^{k_n+1} = \left(0,\ldots,1,0\right)$ & & \!\!$\bm{y}^{|\bm{k}|-k_n} = \left(k_1,\ldots,k_{n-1},0\right)'$  \\
$\bm{y}^2 = \left(0,\ldots,0,2\right)'$ & $\bm{y}^{k_n+2} = \left(0,\ldots,1,1\right)$ &  & \!\!$\bm{y}^{|\bm{k}|-k_n+1} = \left(k_1,\ldots,k_{n-1},1\right)'$\\
\vdots & \vdots & $\cdots$  & \vdots \\
$\bm{y}^{k_n} = \left(0,\ldots,0,k_n\right)'$ & $\bm{y}^{2(k_n+1)-1} = \left(0,\ldots,1,k_n\right)'$ &  &  \!\!$\bm{y}^{|\bm{k}|} = \left(k_1,\ldots,k_n\right)'$ 
\end{tabular}
\caption{Illustration of lexicographical ordering of vectors in $S(\bm{k})$.} 
\label{tab:lexi_k}
\end{table}
The following result on this type of ordering provides the foundation to successively compute multivariate higher order moments based on already computed lower order moments.  
\begin{theorem}\label{thm:prop_lexi}
Assume that $\bm{y}^{1},\ldots,\bm{y}^{|\bm{k}|}$ are lexicographical ordered.\ Then for all  $i=1,\ldots,|\bm{k}|$ and $m=0,1,\ldots,i-1$ such that $\bm{y}^i\geq \bm{y}^m$, we have 
\begin{align*}
S(\bm{y}^i-\bm{y}^m)\subseteq \bigcup_{j=m}^{i-1} \left\{\bm{y}^i-\bm{y}^{j}\right\}. 
\end{align*} 
\end{theorem}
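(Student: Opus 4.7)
My plan is to give a direct bijective argument: I will show that for every candidate $\bm{\xi}\in S(\bm{y}^i-\bm{y}^m)$, the vector $\bm{z}:=\bm{y}^i-\bm{\xi}$ is itself one of the enumerated vectors $\bm{y}^j$, and that its index $j$ is forced to lie in $\{m,\ldots,i-1\}$. In the convention of the statement I take $\bm{y}^0=\bm{0}$.

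First I would verify that $\bm{z}$ belongs to $\widetilde{S}(\bm{k})=S(\bm{k})\cup\{\bm{0}\}$. Since $\bm{\xi}\leq \bm{y}^i-\bm{y}^m$, I obtain pointwise $\bm{0}\leq \bm{y}^m\leq \bm{z}\leq \bm{y}^i\leq \bm{k}$, so $\bm{z}\in \N_0^n$ with $\bm{z}\leq \bm{k}$. Hence either $\bm{z}=\bm{0}$ (which forces $\bm{y}^m=\bm{0}$, i.e.\ $m=0$, and then $\bm{\xi}=\bm{y}^i-\bm{y}^0$ lies in the union) or $\bm{z}=\bm{y}^j$ for a unique $j\in\{1,\ldots,|\bm{k}|\}$.

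The second step is to pin down the index $j$ from above and below, and here the only real work is the basic observation that pointwise domination implies lexicographic domination. Concretely, if $\bm{a}\geq \bm{b}$ pointwise and $\bm{a}\neq \bm{b}$, then at the first coordinate where they differ $\bm{a}$ is strictly larger, hence $\bm{a}>_{\mathrm{lex}}\bm{b}$. Applying this:
\begin{itemize}
\item[(a)] From $\bm{\xi}\neq \bm{0}$, let $u$ be the smallest coordinate with $\xi_u>0$. Then $\bm{z}$ and $\bm{y}^i$ agree on coordinates $1,\ldots,u-1$ while $z_u=y^i_u-\xi_u<y^i_u$. Thus $\bm{z}<_{\mathrm{lex}}\bm{y}^i$, which under the lexicographic enumeration yields $j<i$.
\item[(b)] From $\bm{z}\geq \bm{y}^m$ pointwise, the observation above gives $\bm{z}\geq_{\mathrm{lex}}\bm{y}^m$, hence $j\geq m$.
\end{itemize}

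Combining (a) and (b) gives $j\in\{m,m+1,\ldots,i-1\}$, and by construction $\bm{\xi}=\bm{y}^i-\bm{y}^j$, so $\bm{\xi}\in \bigcup_{j=m}^{i-1}\{\bm{y}^i-\bm{y}^j\}$, which is exactly the desired inclusion. The main (and only) obstacle is a clean statement of the ``pointwise$\Rightarrow$lex'' principle together with careful bookkeeping of the convention $\bm{y}^0=\bm{0}$ so that the edge case $\bm{z}=\bm{0}$ is absorbed cleanly into the same union.
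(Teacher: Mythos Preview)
Your proof is correct and is genuinely cleaner than the paper's argument. The paper proves the result only for $n=2$ by writing down an explicit mixed-radix formula for the lexicographic enumeration, namely $\bm{y}^i=(a,\,i-a(k_2+1))'$ for $i\in[a(k_2+1),(a+1)(k_2+1)-1]$, and then verifies by direct calculation that any $\bm{\xi}\in S(\bm{y}^i-\bm{y}^m)$ can be rewritten in the form $\bm{y}^i-\bm{y}^j$ with $m\le j\le i-1$; the general case is only asserted to follow ``using same (but notationally cumbersome) techniques.'' Your argument instead isolates the one structural fact that drives everything, namely that pointwise domination implies lexicographic domination, and uses it twice to sandwich the index $j$ of $\bm{z}=\bm{y}^i-\bm{\xi}$ between $m$ and $i-1$. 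This works uniformly in $n$, avoids the explicit enumeration formula entirely, and makes transparent why lexicographic ordering is exactly the right ordering here. The paper's approach buys concreteness (one sees the indices explicitly), but at the cost of a dimension-by-dimension argument; your approach buys generality and brevity.
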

\begin{sproof}
We show the result for the two-dimensional case $n=2$, and the generalization to higher dimension then follows using same (but notationally cumbersome) techniques. Let $i=1,\ldots,|\bm{k}|$ be given. From the illustration of the lexicographic ordering of vectors in $S(\bm{k})$ shown in Table \ref{tab:lexi_k}, we may realize that the vectors take the form 
\begin{align*}
\bm{y}^i = \left(a, i-a(k_2+1)\right)', \quad \text{if} \ i\in [a(k_2+1), (a+1)(k_2+1)-1] 
\end{align*}
where $a=1,\ldots,k_1$.\ This can be verified by exploiting the structure of which the ordering is carried out. Now let $m=0,1\ldots,i-1$ be given such that $\bm{y}^i\geq \bm{y}^m$.\ Then similarly there exist $b=0,1,\ldots,k_1$ such that $m\in [b(k_2+1), (b+1)(k_2+1)-1]$, which then gives 
\begin{align}\label{eq:y_i-y_m}
\bm{y}^i - \bm{y}^m = \left(a-b, i-m-(a-b)(k_2+1)\right)',
\end{align}
By similar reasonings, we have for $j=m,\ldots,i-1$ there exist $c=1,\ldots,k_1$ such that $j\in [c(k_2+1), (c+1)(k_2+1)-1]$, giving
\begin{align*}
\bm{y}^i - \bm{y}^j = \left(a-c, i-j-(a-c)(k_2+1)\right)'.
\end{align*} 
Due to the indices having the order $m\leq j\leq i$, we have that $b\leq c \leq a$. Consequently, the set $\bigcup_{j=m}^{i-1} \left\{\bm{y}^i-\bm{y}^{j}\right\}$ consists of vectors on the form $\left(a-c, i-j-(a-c)(k_2+1)\right)'$ where $j$ varies on $m,\ldots,i-1$ and $c$ varies on $b,b+1,\ldots,a$. 

Now take $\bm{\xi} \in S(\bm{y}^i - \bm{y}^m)$.\ Since $\bm{\xi}\leq \bm{y}^i - \bm{y}^m$, it follows from \eqref{eq:y_i-y_m} that there exist $d\leq a-b$ and $r\leq i-m-(a-b)(k_2+1)$, $d,r\in \N_0$, such that 
\begin{align*}
\bm{\xi} &= \left(a-(b+d), i-(m+r)-(a-b)(k_2+1)\right)' \\
         &= \left(a-(b+d), i-(m+r+d(k_2+1))-(a-(b+d))(k_2+1)\right)',
\end{align*}  
where we have added and subtracted the term $d(k_2+1)$ in the second coordinate to obtain the last equality. We now see that $\bm{\xi}$ is on the form $\left(a-c, i-j-(a-c)(k_2+1)\right)'$ with $c = b+d$ and $j = m+r+d(k_2+1)$, and so we have $\bm{\xi}\in \bigcup_{j=m}^{i-1} \left\{\bm{y}^i-\bm{y}^{j}\right\}$, as claimed. This concludes the sketch of the proof.     
\end{sproof}

\end{document}